\documentclass[journal]{IEEEtran}

\usepackage{cite}
\usepackage{graphicx}
\usepackage{dcolumn}
\usepackage{bm}
\usepackage{amssymb}
\usepackage{amssymb}

\def\>{\rangle}

\newtheorem{theorem}{Theorem}

\newtheorem{remark}{Remark}
\newtheorem{example}{Example}

\hyphenation{op-tical net-works semi-conduc-tor}

\begin{document}
%
\title{Nonlinear quantum input-output analysis using Volterra series}
%
%
\author{Jing~Zhang ~\IEEEmembership{Member, IEEE}, Yu-xi Liu, Re-Bing Wu ~\IEEEmembership{Member, IEEE}, Kurt Jacobs, Sahin Kaya Ozdemir, Lan Yang, Tzyh-Jong~Tarn ~\IEEEmembership{Life Fellow, IEEE}, Franco Nori
\thanks{Manuscript received ; first revised ; second revised ; third revised .
J. Zhang and R.~B. Wu are supported by the National Natural
Science Foundation of China under Grant Nos. 61174084, 61134008,
60904034, and Y.-X. Liu is supported by this foundation under
Grant Nos. 10975080, 61025022, 60836010. Y.-X. Liu and J. Zhang
are supported by the National Basic Research Program of China (973
Program) under Grant No. 2014CB921401, the Tsinghua University
Initiative Scientific Research Program, and the Tsinghua National
Laboratory for Information Science and Technology (TNList)
Cross-discipline Foundation. K. Jacobs is partially supported by
the US National Science Foundation projects PHY-1005571 and
PHY-1212413, and by the Army Research Office MURI project grant
W911NF-11-1-0268. L. Yang and S.~K. Sahin are supported by the ARO
grant No. W911NF-12-1-0026 and the NSFC under Grant No. 61328502.
F.N. is partially supported by the RIKEN iTHES Project, MURI
Center for Dynamic Magneto-Optics, and a Grant-in-Aid for
Scientific Research (S).}
\thanks{J. Zhang and R.-B. Wu are with the Department of Automation, Tsinghua University, Beijing
100084, P. R. China.(e-mail: jing-zhang@mail.tsinghua.edu.cn;
rbwu@tsinghua.edu.cn)}
\thanks{Y.-X. Liu is with the Institute of Microelectronics, Tsinghua University,
Beijing 100084, P. R. China.(e-mail:
yuxiliu@mail.tsinghua.edu.cn)}
\thanks{K. Jacobs is with the Department of Physics, University of Massachusetts at Boston, Boston, MA 02125, USA.(e-mail: kurt.jacobs@umb.edu)}
\thanks{S.~K. Ozdemir, L. Yang, and T.-J. Tarn are with the Electrical and Systems Engineering, Washington University,
St. Louis, Missouri 63130, USA.(e-mail:
ozdemir.sahin.kaya@gmail.com; yang@seas.wustl.edu;
tarn@wuauto.wustl.edu)}
\thanks{F. Nori is with the CEMS, RIKEN, Wako-shi, Saitama 351-0198, Japan and Physics Department, The University of Michigan, Ann Arbor, Michigan 48109-1040, USA.(e-mail:fnori@riken.jp)}
\thanks{J. Zhang, R.-B. Wu, Y.-X. Liu, and T.-J. Tarn are also with the Center for Quantum
Information Science and Technology, Tsinghua National Laboratory
for Information Science and Technology, Beijing 100084, P. R.
China.}}
%
%
%
\markboth{To be submitted to IEEE TRANSACTIONS ON AUTOMATIC
CONTROL}{Shell \MakeLowercase{\textit{et al.}}: Bare Demo of
IEEEtran.cls for Journals}
%



\maketitle

\begin{abstract}
Quantum input-output theory plays a very important role for
analyzing the dynamics of quantum systems, especially large-scale
quantum networks. As an extension of the input-output formalism of
Gardiner and Collet, we develop a new approach based on the
quantum version of the Volterra series which can be used to
analyze nonlinear quantum input-output dynamics. By this approach,
we can ignore the internal dynamics of the quantum input-output
system and represent the system dynamics by a series of kernel
functions. This approach has the great advantage of modelling
weak-nonlinear quantum networks. In our approach, the number of
parameters, represented by the kernel functions, used to describe
the input-output response of a weak-nonlinear quantum network,
increases linearly with the scale of the quantum network, not
exponentially as usual. Additionally, our approach can be used to
formulate the quantum network with both nonlinear and
nonconservative components, e.g., quantum amplifiers, which cannot
be modelled by the existing methods, such as the
Hudson-Parthasarathy model and the quantum transfer function
model. We apply our general method to several examples, including
Kerr cavities, optomechanical transducers, and a particular
coherent feedback system with a nonlinear component and a quantum
amplifier in the feedback loop. This approach provides a powerful
way to the modelling and control of nonlinear quantum networks.
\end{abstract}

\begin{keywords}
Nonlinear quantum systems, Volterra series, quantum input-output
networks, quantum coherent feedback control, quantum control.
\end{keywords}

%
\IEEEpeerreviewmaketitle

\section{Introduction}\label{s1}
%
%
%
%
\PARstart{T}here has been tremendous progress in the last few
years in the fields of quantum communication networks and quantum
internet~\cite{HJKimble,RvanMeterbook,JZhangSR}, quantum
biology~\cite{NLambert}, quantum chemistry, hybrid quantum
circuits~\cite{ZLXiang}, quantum computing and quantum
simulation~\cite{IGeorgescu,IBuluta}, and quantum
control~\cite{Alessandro,HMabuchiIJRNC:2005,PRouchon:2008,CBrif:2010,JZhangPhysRep,DYDong:2010,CAltafini:2012,GHuang:1983,HMWisemanbook,MRJames:2005,HMabuchiScience:2002,SHabibLAS:2002,Mirrahimi,NYamamoto,Altafini3,JZhang4,Wang,BQi,WCuiPRA:2013}.
These progresses pave the way to the development of large-scale
quantum networks. Although scalable quantum networks exhibit
advantages in information processing and transmission, many
problems are still left to be solved to model such complex quantum
systems. Different approaches have been proposed to analyze
quantum networks, among which the input-output formalism of
Gardiner and Collet~\cite{Gardiner,Gardiner2} is a useful tool to
describe the input-output dynamics of such systems. In fact, using
the input-output response to analyze and control the system
dynamics is a standard method in engineering. The quantum
input-output theory~\cite{Gardiner,Gardiner2} has been extended to
cascaded-connected quantum systems~\cite{Gardiner3,Carmichael} and
even more complex Markovian feedforward and feedback quantum
networks, including both dynamical and static
components~\cite{James,Gough,Squeezing_by_coherent_feedback,HMWiseman4,Lloyd,Mabuchi,RHamerly,Chia,SIida,JKerckhoff,Nurdin,Maalouf,JZhang,JZhang2,GFZhang}.
Two main different formulations are proposed in the literature to
model such quantum input-output networks: (i) the time-domain
Hudson-Parthasarathy formalism~\cite{Hudson}, which can be
considered as the extension of the input-output theory developed
by Gardiner and Collet; and (ii) the frequency-domain quantum
transfer function formalism~\cite{MYanagisawa1,AJShaiju}.

In the existing literature, quantum input-output theory is mainly
applied to optical systems, in which the ``memory'' effects of the
environment are negligibly small and the nonlinear effects of the
systems are weak and thus sometimes omitted. These lead to the
development of the Markovian and linear quantum input-output
network theory~\cite{James,MYanagisawa1,HMWisemanPRL2005}. In more
general cases, such as in mesoscopic solid-state systems, both the
linear and Markovian assumptions may not be valid. Recently, the
Markovian quantum input-output theory has been extended to the
non-Markovian case for single quantum input-output
components~\cite{LDiosi} or even quantum networks~\cite{JZhang3}.
However, how to model and analyze nonlinear quantum input-output
systems is still an open problem.

Recent experimental
progresses~\cite{Mabuchi,SIida,BLHigginsNature:2007,GYXiangNatPhoton:2011,HYonezawaScience:2012,CSayrinNature:2011,EGavarttinNatnano:2012},
especially those in solid-state quantum
circuits~\cite{JKerckhoff,RVijayNature2012,DRisteNature:2013,DRistePRL:2012,SShankerNature:2013},
motivate us to find some ways to analyze nonlinear quantum
input-output networks. It should be pointed out that the
Hudson-Parthasarathy model~\cite{Hudson} can in principle
formulate particular nonlinear quantum input-output systems.
However, it cannot be applied to more general cases, such as those
with both nonlinear components and quantum amplifiers. An
additional problem yet to be solved is the computational
complexity for modelling large-scale quantum input-output
networks. The Hudson-Parthasarathy model gives the input-output
response in terms of the internal system dynamics, and this will
lead to an exponential increase of the computational complexity
when we apply it to large-scale quantum networks composed of many
components. For most cases, this exponentially-increased
large-scale model contains redundant information. Not all the
internal degrees of freedom are necessary to be known. For
example, let us consider a quantum feedback control system
composed of the controlled system and the controller in the
feedback loop. We may not be interested in the internal dynamics
of the controller, but only concern how the controller modifies
the signal fed into it. This can be obtained by an input-output
response after averaging over the internal degrees of freedom of
the controller, which may greatly reduce the computational
complexity of the quantum network analysis. For linear quantum
networks, such an input-output response can be obtained by the
quantum transfer function model. However, it may not be applied to
nonlinear quantum networks.

To solve all these problems, we establish a nonlinear quantum
input-output formalism based on the so-called Volterra
series~\cite{DGorman:1965,LOChua1,LOChua2,MKielkiewicz:1970,JCPeyton:1991}.
This formalism gives a simpler form to model weak-nonlinear
quantum networks. This paper is organized as follows: a brief
review of quantum input-output theory is first presented in
Sec.~\ref{s2}, and then the general form of the Volterra series
for m-port quantum input-output systems is introduced in
Sec.~\ref{s3}. The Volterra series approach is extended in
sec.~\ref{s4} to more general cases in the frequency domain to
analyze more complex quantum input-output networks with multiple
components connected in series products and concatennation
products. Our general results are then applied to several examples
in Sec.~\ref{s5}. Conclusions and discussion of future work are
given in Sec.~\ref{s6}.

\section{Brief review of quantum input-output theory}\label{s2}

\subsection{Gardiner-Collet input-output formalism}\label{s21}

The original model of a general quantum input-output system is a
plant interacting with a bath. Under the Markovian approximation
(in which the coupling strengths between the system and different
modes of the bath are assumed to be constants for all
frequencies~\cite{Gardiner,Gardiner2}), an arbitrary system
operator $ Z\!\left( t \right) $ satisfies the following quantum
stochastic differential equation (QSDE)
\begin{eqnarray}\label{Quantum stochastic differential equation}
\dot{Z}&=&-i\left[Z,H_S\right] + \frac{1}{2} \left\{ {\bf
L}^{\dagger} \left[ Z,
{\bf L} \right] + \left[ {\bf L}^{\dagger}, Z \right] {\bf L} \right\} \nonumber \\
&& + \left\{ {\bf b}_{\rm in} \left[ {\bf L}^{\dagger}, Z \right]
+ \left[ Z, {\bf L} \right] {\bf b}_{\rm in}^{\dagger} \right\},
\end{eqnarray}
with ${\bf L}=\left(L_1,\cdots,L_m\right)^T$, where $L_i$'s are
the system operators representing the dissipation channels of the
system coupled to the input fields.
Let ${\bf b}_{\rm in}\!\left(t\right)=\left[b_{1,{\rm
in}}\!\left(t\right),\cdots,b_{m,{\rm
in}}\!\left(t\right)\right]^T$ and output field ${\bf b}_{\rm
out}\!\left(t\right)=\left[b_{1,{\rm
out}}\!\left(t\right),\cdots,b_{m,{\rm
out}}\!\left(t\right)\right]^T$ be the time-varying input fields
that are fed into the system and the output fields being about to
propagate away, one has the relation
\begin{equation}\label{Output equation under the Markovian approximation}
{\bf b}_{\rm out}\!\left( t \right) = {\bf b}_{\rm in}(t) + {\bf
L}(t).
\end{equation}
This is the standard Gardiner-Collet input-output relation.

\subsection{Hudson-Parthasarathy model}\label{s22}

The Gardiner-Collet input-output theory can be extended to more
general case to include the static components such as quantum beam
splitters by the Hudson-Parthasaraty model~\cite{Gough}. A more
general multi-input and multi-output open quantum systems can be
characterized by the following tuple of parameters
\begin{equation}\label{SLH}
G = \left( {\bf S}, {\bf L}, H \right),
\end{equation}
where $H$ is the internal Hamiltonian of the system; $ {\bf S } $
is a $ n \times n $ unitary scattering matrix induced by the
static components. The notations given in Eq.~(\ref{SLH}) can be
used to describe a wide range of dynamical and static systems. For
example, the traditional quantum input-output systems represented
by Eqs.~(\ref{Quantum stochastic differential equation}) and
(\ref{Output equation under the Markovian approximation}) can be
written as $ G_{LH} = \left( I, {\bf L}, H \right)$, and the
quantum beam splitter with scattering matrix ${\bf S}$ can be
represented by $ G_{BS} = \left( {\bf S}, 0, 0 \right) $.

To obtain the dynamics of the input-output system given by
Eq.~(\ref{SLH}), we first introduce the quantum Wiener process $
{\bf B} \left( t \right) $ and the quantum Poisson process $ {\bf
\Lambda } \left( t \right) $ as
\begin{equation}\label{Quantum Wiener and Poisson processes}
{ \bf B } \left( t \right) = \left(%
\begin{array}{c}
  B_1 \\
  \vdots \\
  B_n \\
\end{array}%
\right), \quad { \bf \Lambda } \left( t \right) = \left(%
\begin{array}{ccc}
  B_{11} & \cdots & B_{1n} \\
  \vdots & \ddots & \vdots \\
  B_{n1} & \cdots & B_{nn} \\
\end{array}%
\right),
\end{equation}
which are defined by
\begin{equation}\label{Definition of quantum Wiener and Poisson processes}
B_i \left( t \right) = \int_0^t b_{i,{\rm in}} \left( \tau \right)
d \tau, \quad B_{ij} \left( t \right) = \int_0^t b_{i,{\rm
in}}^{\dagger} \left( \tau \right) b_{j,{\rm in}} \left( \tau
\right) d \tau.
\end{equation}
In the Heisenberg picture, the system operator $ Z \left( t
\right)$ satisfies the following quantum stochastic differential
equation
\begin{eqnarray}\label{Quantum stochastic differential equation of SLH}
d Z  & = & \left\{ \mathcal{L}_{ {\bf L}} \left( Z  \right) - i
\left[ Z, H \right] \right\} dt + d {\bf B}^{\dagger} {\bf
S}^{\dagger} \left[ Z , {\bf L} \right] \nonumber \\
& & + \left[ {\bf L}^{\dagger} , Z \right] {\bf S} d {\bf B} +
{\bf tr} \left\{ \left[ {\bf S}^{\dagger} Z {\bf S} - Z \right] d
{\bf \Lambda}^T \right\},
\end{eqnarray}
where the Liouville superoperator $ \mathcal{L}_{ {\bf L }
}\left(\cdot \right)$ is defined by
\begin{eqnarray}\label{Dissipation Liouville superoperator}
\mathcal{L}_{ {\bf L } } \left( X \right) &=& \frac{1}{2} {\bf
L}^{\dagger} \left[ X, {\bf L } \right] + \frac{1}{2} \left[ {\bf
L}^{\dagger}, X \right] {\bf L}\nonumber\\
&&=\sum_{j=1}^n \left\{ \frac{1}{2} L_j^{\dagger} \left[ X, L_j
\right] + \frac{1}{2} \left[ L_j^{\dagger}, X \right] L_j
\right\},
\end{eqnarray}
which is of the standard Lindblad form. Similar to
Eq.~(\ref{Output equation under the Markovian approximation}), we
can obtain the following input-output relation
\begin{eqnarray}\label{Input-output relation of quantum Wiener and Poisson processes}
d {\bf B}_{\rm out} & = & {\bf S} d {\bf B} + {\bf L} dt,
\nonumber \\
d {\bf \Lambda}_{\rm out} & = & {\bf S}^* d {\bf \Lambda} {\bf
S}^T + {\bf S}^* d {\bf B}^* {\bf L}^T + {\bf L}^* d {\bf B}^T
{\bf
S}^T + {\bf L}^* {\bf L}^T dt,\nonumber \\
\end{eqnarray}
where $d{\bf B}_{\rm out}$ and $d{\bf \Lambda}_{\rm out}$ are the
output fields corresponding to the quantum Wiener process $ d{\bf
B} $ and Poisson process $ d\Lambda $.

\subsection{Quantum transfer function model}\label{s23}

The Gardiner-Collet input-output theory, or the more general
Hudson-Parthasarathy model introduced in subsections~\ref{s21} and
\ref{s22}, can be used to represent a large class of quantum
input-output systems. However, the Hudson-Parthasarathy model is
complex if the interior degrees of freedom of the system are very
high. The quantum transfer function model can be applied to some
cases that the Hudson-Parthasarathy model is invalid or
inefficient.

Different from the Hudson-Parthasarathy model, which is in the
time domain, the quantum transfer function model is a
frequency-domain approach~\cite{MYanagisawa1,AJShaiju} and can
only be applied to linear quantum input-output systems. The system
we consider is composed of $r$ harmonic oscillators $ \left\{ a_j:
j=1,\cdots,r \right\} $, which satisfy the following canonical
commutation relations
\begin{eqnarray*}
\left[ a_j, a_k^{\dagger} \right] = \delta_{jk}, \quad \left[ a_j,
a_k \right] = \left[ a_j^{\dagger}, a_k^{\dagger} \right] =0.
\end{eqnarray*}
We are interested in a general linear quantum system, which, in
the $\left( {\bf S}, {\bf L}, H\right)$ notation given by
Eq.~(\ref{SLH}), satisfies the following conditions: (i) the
dissipation operators $L_j$'s are linear combinations of $a_k$,
i.e., $L_j$ can be written as $L_j=\sum_k c_{jk} a_k$; and (ii)
the system Hamiltonian $H$ is a quadratic function of $a_k$, i.e.,
$H=\sum_{jk}\omega_{jk}a_j^{\dagger}a_k$. Under these conditions,
we can obtain the following equivalent expression of
Eq.~(\ref{SLH}):
\begin{equation}\label{SCOmega}
G = \left( {\bf S}, C, \Omega \right),
\end{equation}
where
\begin{eqnarray*}
C=\left(%
\begin{array}{ccc}
  c_{11} & \cdots & c_{1r} \\
  \vdots & \ddots & \vdots \\
  c_{r1} & \cdots & c_{rr} \\
\end{array}%
\right),\quad \Omega=\left(%
\begin{array}{ccc}
  \omega_{11} & \cdots & \omega_{1r} \\
  \vdots & \ddots & \vdots \\
  \omega_{r1} & \cdots & \omega_{rr} \\
\end{array}%
\right).
\end{eqnarray*}

Let us introduce an operator vector called the state vector of the
system ${\bf a}= \left( a_1, \cdots, a_r \right)^T$, then from
Eqs.~(\ref{Quantum stochastic differential equation of SLH}) and
(\ref{Input-output relation of quantum Wiener and Poisson
processes}), we can obtain the following Heisenberg-Langevin
equation and input-output relation
\begin{eqnarray}
\dot{{\bf a}}\!\left( t \right) & = & A\,{\bf a}\!\left( t \right)
- C^{\dagger} {\bf S} {\bf b}_{\rm in}\! \left( t \right),
\label{Heisenberg-Langevin equation} \\
{\bf b}_{\rm out} & = & {\bf S} {\bf b}_{\rm in}\! \left( t
\right) + C\,{\bf a}\!\left( t \right), \label{Input-output
relation}
\end{eqnarray}
where $A=-C^{\dagger}C/2-i\Omega$. Such kind of linear equations
can be solved in the frequency domain. To show this, let us
introduce the Laplace transform which is defined for $ {\rm
Re}\left( s \right)
> 0 $ by
\begin{equation}\label{Laplace transform of the linear equation}
R \left( s \right) = \int_0^{\infty} \exp{ \left( - s t \right) }
R \left( t \right) d t.
\end{equation}
In the frequency domain, Eqs.~(\ref{Heisenberg-Langevin equation})
and (\ref{Input-output relation}) can be solved as
\begin{eqnarray}
& {\bf a}\! \left( s \right) = - \left( s I_r - A \right)^{-1}
C^{\dagger} {\bf S} {\bf b}_{\rm in}\!\left( s \right), &
\label{Heisenberg-Langevin equation in the
frequency domain} \\
& {\bf b}_{\rm out} \left( s \right) = {\bf S} {\bf b}_{\rm in}
\left( s \right) + C {\bf a} \left( s \right). &
\label{Input-output relation in the frequency domain}
\end{eqnarray}
Then, we can obtain the input-output relation of the whole system
\begin{equation}\label{Input-output relation by quantum transfer function model}
{\bf b}_{\rm out} \left( s \right) = \Xi \left( s \right) {\bf
b}_{\rm in} \left( s \right),
\end{equation}
where $ \Xi \left( s \right) $ is the transfer function of the
linear quantum system, which can be calculated by
\begin{equation}\label{Quantum transfer function}
\Xi \left( s \right) = {\bf S} - C \left( s I_r- A \right)^{ -1 }
C^{\dagger} {\bf S}.
\end{equation}
The input-output relation~(\ref{Input-output relation by quantum
transfer function model}) shows the linear map between the input
and output of the linear quantum system given by
Eqs.~(\ref{Heisenberg-Langevin equation}) and (\ref{Input-output
relation}).

\section{Nonlinear response by Volterra series}\label{s3}

The Gardiner and Collet's quantum input-output
model~\cite{Gardiner,Gardiner2}, or more generally the
Hudson-Parthasarathy model, give a general form of the quantum
input-output response, but there are internal degrees of freedom
determined by Eq.~(\ref{Quantum stochastic differential
equation}). Sometimes, it is not easy to use these models to
describe an input-output system, especially for nonlinear systems
of which the interior degrees of freedom are extremely high or
even infinite-dimensional. However, the complexity of the quantum
input-output model can be greatly reduced if we average out the
interior dynamics. For linear quantum systems, such a reduction
process leads to the quantum transfer function model, in which the
quantum input-output response is represented in the frequency
domain as a linear input-output relation with a proportional gain
called {\it the quantum transfer function}. As an extension of
this method, we will show that the Volterra series can be used to
describe the input-output response for more general nonlinear
quantum input-output systems.
\\[0.002cm]

\begin{theorem}(Nonlinear quantum input-output response by Volterra
series)\label{Nonlinear quantum input-output response by Volterra
series}

The quantum input-output relation of a general m-port quantum
system with input field ${\bf b}_{\rm
in}\!\left(t\right)=\left[b_{1,{\rm
in}}\!\left(t\right),\cdots,b_{m,{\rm
in}}\!\left(t\right)\right]^T$ and output field ${\bf b}_{\rm
out}\!\left(t\right)=\left[b_{1,{\rm
out}}\!\left(t\right),\cdots,b_{m,{\rm
out}}\!\left(t\right)\right]^T$ can be expressed as the following
{\it Volterra series}

\begin{eqnarray}\label{Volterra series of multi-input quantum input-output system}
b_{j, {\rm out}}^{\pm}\!\left(t\right)&=&\sum_{n=1}^{+\infty}\!
\int_0^t\!\!\int_0^{\tau_1}\!\!\!\!\cdots\!\!\!\int_0^{\tau_{n-1}}\!\!\!\!\!\!\!\!b_{
j_1, {\rm in}}^{\left(i_1\right)}\!\left(\tau_1\right)\cdots
b_{j_n, {\rm in}}^{\left(i_n\right)}\!\left(\tau_n\right)\nonumber\\
&& k^{\pm,j_1\cdots j_n}_{j,i_1\cdots
i_n}\!\left(t-\tau_1,\cdots,\tau_{n-1}-\tau_n\right) d {\bf \tau},
\end{eqnarray}
where $i_k=\pm$ and $j_k=1,\cdots,m$; $b_{j,{\rm
in}}^{(-)}=b_{j,{\rm in}}$ and $b_{j,{\rm out}}^{(-)}=b_{j,{\rm
out}}$; $b_{j,{\rm in}}^{(+)}$ and $b_{j,{\rm out}}^{(+)}$ are the
conjugate operators of $b_{j,{\rm in}}$ and $b_{j,{\rm out}}$; and
$\left\{k^{i,j_1\cdots j_n}_{j,i_1\cdots
i_n}\!\left(\tau_1,\cdots,\tau_n\right)\right\}$ are the kernel
functions of the $m$-port input-output system. Here, we omit the
sum of the indices $i_k=\pm$ and $j_k=1,\cdots,m$ by the Einstein
summation convention.
\begin{figure}[t]
\centerline{\includegraphics[width=6.7 cm]{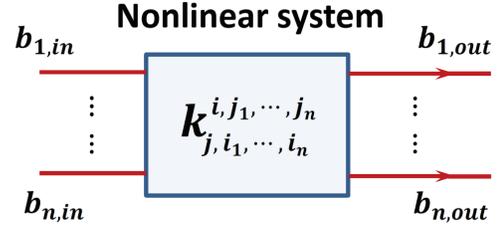}}
\caption{(color online) An $n$-port nonlinear input-out component
with kernel functions $\left\{k^{i,j_1\cdots j_n}_{j,i_1\cdots
i_n}\!\left(\tau_1,\cdots,\tau_n\right)\right\}$.}\label{Fig of
n-port quantum input-output component}
\end{figure}

\end{theorem}

\begin{proof}
To prove the theorem, let us first assume that the dynamical Lie
algebra of the dynamical system given by Eq.~(\ref{Quantum
stochastic differential equation}) is $\mathcal{L}={\rm
span}\left(\left\{-iX_{\alpha}\right\}\right)$, where
$-iX_{\alpha}$'s are the basis elements of the Lie algebra
$\mathcal{L}$ which satisfy the following commutation relation
\begin{equation}\label{Commutation relation of the dynamical Lie algebra}
[X_{\alpha},X_{\beta}]=-i\sum_{\gamma} C_{\alpha\beta}^{\gamma}
X_{\gamma}.
\end{equation}
$C_{\alpha\beta}^{\gamma}$'s are the structure constants of the
Lie algebra $\mathcal{L}$. Let us define an operator vector ${\bf
X}=\left(X_{\alpha}\right)$, of which the entries come from the
basis elements of the Lie algebra $\mathcal{L}$. From
Eq.~(\ref{Quantum stochastic differential equation}), we can
obtain the following dynamical equation for the operator vector
${\bf X}$
\begin{equation}\label{Bilinear equation for general quantum input-output components}
\dot{{\bf X}}\!\left(t\right)=A{\bf X}\left(t\right)+\left(B^*{\bf
b}_{\rm in}\!\left(t\right)+B {\bf b}^{\dagger}_{\rm
in}\right){\bf X}\!\left(t\right).
\end{equation}

From Eq.~(\ref{Bilinear equation for general quantum input-output
components}), we have the following formal series solution for the
above equation
\begin{eqnarray}\label{Iterative solution of bilinear
equation} {\bf X}\!\left(t\right)&=&e^{At}{\bf X}+\int_0^t
e^{A\left(t-\tau_1\right)}\left[ B^* {\bf b}_{\rm
in}\!\left(\tau_1\right)\right. \nonumber\\
&&\left.+B {\bf b}_{\rm in}^{\dagger}\!\left(\tau_1\right)\right]
{\bf X}\!\left(\tau_1\right) d \tau_1,
\end{eqnarray}
where ${\bf X}$ is the operator vector in the Schr\"{o}dinger
picture. By solving ${\bf X}\!\left(\tau_1\right)$ in the integral
of Eq.~(\ref{Iterative solution of bilinear equation}), we can
obtain the iterative solution
\begin{eqnarray*}
{\bf X}\!\left(t\right)&=&e^{At}{\bf X}+\int_0^t
e^{A\left(t-\tau_1\right)}\left[ B^* {\bf b}_{\rm
in}\!\left(\tau_1\right)\right.\\
&&\left.+B {\bf b}_{\rm in}^{\dagger}\!\left(\tau_1\right)\right]
e^{A \tau_1} {\bf
X} d \tau_1 \\
& & + \int_0^t e^{A\left(t-\tau_1\right)}\left[ B^* {\bf b}_{\rm
in}\!\left(\tau_1\right)+B {\bf b}_{\rm
in}^{\dagger}\!\left(\tau_1\right)\right] d\tau_1 \\
& & \int_0^{\tau_1} e^{A\left(\tau_1-\tau_2\right)} \left[ B^*
{\bf b}_{\rm in}\!\left(\tau_2\right)+B {\bf b}_{\rm
in}^{\dagger}\!\left(\tau_2\right)\right] {\bf
X}\!\left(\tau_2\right) d\tau_2.
\end{eqnarray*}
Solving the above equation in the same way, we can obtain the
following series solution of Eq.~(\ref{Bilinear equation for
general quantum input-output components})
\begin{eqnarray}\label{Iterative expansion of the operator vector}
{\bf X}\!\left(t\right)&=&e^{At}{\bf X}+\sum_{n=1}^{\infty}\!
\int_0^t\!\!\!\!\cdots\!\!\!\int_0^{\tau_{n-1}}\!\!\!\!\!\!\!\!
e^{A\left(t-\tau_1\right)}\left[ B^* {\bf b}_{\rm
in}\left(\tau_1\right)\right.\nonumber\\
&&\left.+B {\bf b}_{\rm in}^{\dagger}\!\left(\tau_1\right)\right]
e^{A \left(\tau_1-\tau_2\right)}\cdots e^{-A\tau_n} {\bf X} d
\tau_1\cdots d\tau_n.\nonumber\\
\end{eqnarray}

Since $\left\{-iX_{\alpha}\right\}$ is the basis of the dynamical
Lie algebra of the quantum input-output system, the system
operator ${\bf L}$ in the output equation~(\ref{Quantum stochastic
differential equation}) can be written as the linear combination
of $\left\{X_{\alpha}\right\}$, i.e.,
\begin{equation}\label{Linear combination of dissipation operator}
{\bf L}=\sum_{\alpha} l_{\alpha} X_{\alpha},
\end{equation}
where $l_{\alpha}\in\mathbb{C}^n$. Let us then assume that the
total input-output system composed of the internal degrees of
freedom and the external input field is initially in a separable
state $\rho_{\rm tot}\!\left(0\right)=\rho_0\otimes\rho_b$, where
$\rho_0$ and $\rho_b$ are, respectively, the initial states of the
internal system and the external input field. If we average over
the internal degrees of freedom of the quantum input-output
system, the output equation~(\ref{Output equation under the
Markovian approximation}) can be rewritten as
\begin{equation}\label{Output equation under the Markovian approximation in average}
{\bf b}_{\rm out}\!\left( t \right) = {\bf b}_{\rm in}\!(t) +
\sum_{\alpha} {\bf l}_{\alpha} \langle X_{\alpha}\!(t) \rangle_0,
\end{equation}
where $\langle X_{\alpha}\!\left(t\right) \rangle_0={\rm
tr}\left[X_{\alpha}\!\left(t\right)\rho_0\right]$. By substituting
Eq.~(\ref{Iterative expansion of the operator vector}) into
Eq.~(\ref{Output equation under the Markovian approximation in
average}), we can obtain the Volterra series of a general
nonlinear quantum input-output component given by
Eq.~(\ref{Volterra series of multi-input quantum input-output
system}).
\end{proof}

As shown in Eq.~(\ref{Volterra series of multi-input quantum
input-output system}), the system input-output response is fully
determined by the set of parameters $\left\{k^{\pm,j_1\cdots
j_n}_{j,i_1\cdots i_n}\!\left(\tau_1,\cdots,\tau_n\right)\right\}$
called {\it Volterra kernels}. It is also shown in the proof of
theorem~\ref{Nonlinear quantum input-output response by Volterra
series} that these kernel functions are just determined by the
high-order quantum correlations of the interior dynamics of the
quantum input-output system. Notice that the quantum Volterra
series is different from the classical Volterra series because the
terms like ${\bf b}_{\rm
in}^{\left(i_1\right)}\left(\tau_1\right),\cdots,{\bf b}_{\rm
in}^{\left(i_n\right)}\left(\tau_n\right)$ do not commute with
each other and the vacuum fluctuations in the input field should
be considered when we analyze quantum input-output response.

Different from the Volterra series used for for classical systems,
the kernel functions of the quantum Volterra series method we
present here should satisfy additional physically-realizable
conditions~\cite{James,AJShaiju} constrained by the theory of
quantum mechanics. For example, for a Markovian input-output
system~\cite{Gough}, the commutation relation should be preserved
from the input field to the output field, i.e., $[b_{\rm
out}\!\left(t\right),b_{\rm
out}^{\dagger}\!\left(t^{\prime}\right)]=[b_{\rm
in}\!\left(t\right),b_{\rm
in}^{\dagger}\!\left(t^{\prime}\right)]$. This leads to additional
equality constraints for the kernel functions $k^{\pm,j_1\cdots
j_n}_{j,i_1\cdots i_n}\!\left(\tau_1,\cdots,\tau_n\right)$.

Although the right side of Eq.~(\ref{Volterra series of
multi-input quantum input-output system}) is an infinite series,
i.e., a series with infinitely many terms, we can use its finite
truncations to represent the input-output response under
particular conditions. In fact, for linear systems, there are only
linear terms in Eq.~(\ref{Volterra series of multi-input quantum
input-output system}). Motivated by this consideration, we then
study a weak-nonlinear quantum system with $r$ internal modes
given by the annihilation (creation) operators $a_{i=1,\cdots,r}$
($a^{\dagger}_{i=1,\cdots,r}$). For such a weak nonlinear quantum
system, the system Hamiltonian $H$ and the dissipation operator
${\bf L}$ in Eq.~(\ref{Quantum stochastic differential equation})
can be expressed as
\begin{equation}\label{Hamiltonian and dissipation operator of weak-nonlinear quantum system}
H=H_l+\mu H_{nl},\quad{\bf L}={\bf L}_l+\mu{\bf L}_{nl},
\end{equation}
where $H_l$ and ${\bf L}_l$ are quadratic and linear functions of
the annihilation and creation operators $a_i$ and $a_i^{\dagger}$;
$H_{nl}$ and ${\bf L}_{nl}$ are higher-order nonlinear terms of
$a_i$ and $a^{\dagger}_i$; and $\mu$ is a parameter introduced to
determine the nonlinear degree of the system. For a weak-nonlinear
system, we have $\mu\ll 1$. The following theorem shows that we
can use the finite truncation up to low-order terms and omit
higher-order nonlinear terms in the Volterra series for
weak-nonlinear systems satisfying Eq.~(\ref{Hamiltonian and
dissipation operator of weak-nonlinear quantum system}).
\\[0.002cm]

\begin{theorem}(Volterra series for weak-nonlinear systems)\label{Volterra series for weak-nonlinear system}

For a weak-nonlinear quantum input-output system with Hamiltonian
and dissipation operators given by Eq.~(\ref{Hamiltonian and
dissipation operator of weak-nonlinear quantum system}), the
Volterra series for this quantum input-output system can be
written as
\begin{eqnarray}\label{Volterra series of weak-nonlinear quantum input-output system}
b_{j, {\rm out}}^{\pm}\!\left(t\right)&=&\int_0^t k^{\pm\,j_1}_{j,i_1}(t-\tau)b^{(i)}_{j_1,{\rm in}}\left(\tau_1\right)d\tau_1\nonumber\\
&&+\mu\int_0^t\!\!\int_0^{\tau_1}d\tau_1 d\tau_2\,b_{ j_1, {\rm
in}}^{\left(i_1\right)}\!\left(\tau_1\right) b_{j_2, {\rm in}}^{\left(i_2\right)}\!\left(\tau_2\right)\nonumber\\
&&
k^{\pm,j_1j_2}_{j,i_1i_2}\!\left(t-\tau_1,\tau_1-\tau_2\right)+o\left(\mu\right).
\end{eqnarray}
\\[0.002cm]
\end{theorem}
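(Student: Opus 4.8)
The plan is to insert the weak-nonlinear splitting Eq.~(\ref{Hamiltonian and dissipation operator of weak-nonlinear quantum system}) into the exact expansion of Theorem~\ref{Nonlinear quantum input-output response by Volterra series} and to reorganize it as a power series in the nonlinearity parameter $\mu$. Concretely, I would carry the decomposition $H=H_l+\mu H_{nl}$, ${\bf L}={\bf L}_l+\mu{\bf L}_{nl}$ through the bilinear equation Eq.~(\ref{Bilinear equation for general quantum input-output components}): the generator and the input-coupling matrices inherit the expansions $A=A_0+\mu A_1+O(\mu^2)$ and $B=B_0+\mu B_1+O(\mu^2)$, where the subscript $0$ collects the contributions of $H_l,{\bf L}_l$ and the subscript $1$ those of $H_{nl},{\bf L}_{nl}$ (including the cross Lindblad terms ${\bf L}_l^\dagger{\bf L}_{nl}+{\bf L}_{nl}^\dagger{\bf L}_l$). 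The averaged output relation Eq.~(\ref{Output equation under the Markovian approximation in average}) likewise splits as $\langle{\bf L}(t)\rangle_0=\langle{\bf L}_l(t)\rangle_0+\mu\langle{\bf L}_{nl}(t)\rangle_0$. Feeding these into the iterative solution Eq.~(\ref{Iterative expansion of the operator vector}) and collecting equal powers of $\mu$ should match the $\mu^0$ data to the linear kernel and the $\mu^1$ data to the quadratic kernel of Eq.~(\ref{Volterra series of weak-nonlinear quantum input-output system}).

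At zeroth order I would show that the series of Eq.~(\ref{Volterra series of multi-input quantum input-output system}) collapses to its $n=1$ term. When $\mu=0$ the Hamiltonian is quadratic and ${\bf L}_l$ is linear, so the dynamical Lie algebra closes on ${\rm span}\{a_i,a_i^\dagger,I\}$; the linear input coupling $B_0$ sends each generator $a_i,a_i^\dagger$ into the identity and annihilates the identity, since $[{\bf L}_l^\dagger,I]=0$. Hence a second application of the coupling in Eq.~(\ref{Iterative expansion of the operator vector}) vanishes, the iteration terminates after one step, and only a single convolution survives. This is exactly the linear Volterra term, and it coincides with the quantum transfer function relation of Sec.~\ref{s23}; reading off its kernel gives $k^{\pm\,j_1}_{j,i_1}$.

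At first order I would collect the $O(\mu)$ pieces, which arise in two ways: a single insertion of the nonlinear generator $A_1$ into an otherwise linear propagation, and the nonlinear couplings carried by $\langle{\bf L}_{nl}\rangle_0$ and $B_1$. The structural point that makes the truncation work is that one nonlinear vertex supplies precisely the extra factor of ${\bf b}_{\rm in}$ needed to promote the single linear coupling to a double one: it produces an operator of polynomial degree $\geq 2$ on which $B_0$ can act a second time before reaching the identity. The two resulting nested time integrals reproduce the double convolution with kernel $k^{\pm,j_1j_2}_{j,i_1i_2}(t-\tau_1,\tau_1-\tau_2)$ and the explicit prefactor $\mu$ in Eq.~(\ref{Volterra series of weak-nonlinear quantum input-output system}).

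The crux, and the step I expect to be hardest, is the uniform power counting that sends all remaining terms into $o(\mu)$. The clean statement is that an $n$-th order Volterra kernel needs at least $n-1$ nonlinear vertices to be nonzero: at $\mu=0$ the nilpotency of $B_0$ forbids more than one input leg, and each further leg must be refueled in operator degree by a factor of $\mu$, so the $n$-th kernel is $O(\mu^{n-1})$ and the cubic and higher terms are $O(\mu^2)=o(\mu)$. I would make this rigorous through a filtration of the Lie algebra by polynomial degree in $a_i,a_i^\dagger$, tracking the combined grading by input order, operator degree, and power of $\mu$ along each term of Eq.~(\ref{Iterative expansion of the operator vector}), and bounding the tail uniformly on the interval $[0,t]$. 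The delicate point is that this counting is tight only when the lowest nonlinear term raises the degree by one, i.e. a cubic $H_{nl}$ (or quadratic ${\bf L}_{nl}$), for which one vertex buys exactly one extra leg; for a higher-degree nonlinearity such as a Kerr term the leading nonlinear kernel sits at a higher Volterra order while still being $O(\mu)$, so the displayed quadratic kernel should then be read as the leading admissible kernel whose order is fixed by the lowest nonlinearity.
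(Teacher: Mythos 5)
Your proposal is correct in substance and follows the same underlying route as the paper --- a first-order expansion in $\mu$ of the iterated solution behind Theorem~\ref{Nonlinear quantum input-output response by Volterra series} --- but the bookkeeping is genuinely different, and the difference is instructive. The paper implements the expansion through a coarse two-block splitting: it collects the linear monomials of $a_i,a_i^{\dagger}$ into ${\bf X}_1$ and all higher-order monomials into ${\bf X}_2$, writes the coupled integral equations~(\ref{X1 solution})--(\ref{X2 solution}) with explicit $\mu$'s on the ${\bf X}_2\to{\bf X}_1$ feedback ($\mu A_{12}$) and on the nonlinear output coupling ($\mu\,{\bf l}_2\langle X_2\rangle_0$), and obtains Eq.~(\ref{Volterra series of weak-nonlinear quantum input-output system}) by a single iteration of (\ref{X2 solution}) into (\ref{X1 solution}). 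Your zeroth-order nilpotency of $B_0$ is exactly the paper's structural fact that the input drives the ${\bf X}_1$ equation additively, and your ``one nonlinear vertex buys one extra input leg'' is the paper's bilinear term $\left(B_{21}^*{\bf b}_{\rm in}+B_{21}{\bf b}_{\rm in}^{\dagger}\right){\bf X}_1$ fed by the linearly driven ${\bf X}_1$. What your finer filtration by polynomial degree buys is precision at exactly the point where the paper's proof is schematic: Eq.~(\ref{X2}) attaches a factor $\mu$ to \emph{every} input coupling internal to the nonlinear block, via $\mu\left(B_{22}^*{\bf b}_{\rm in}+B_{22}{\bf b}_{\rm in}^{\dagger}\right){\bf X}_2$, which is valid only when the lowest nonlinear term raises the degree by one (cubic $H_{nl}$ or quadratic ${\bf L}_{nl}$) --- precisely the case in which your $O(\mu^{n-1})$ counting is tight. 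For a quartic (Kerr) nonlinearity, $B_0$ maps degree-three monomials to degree-two monomials, i.e.\ acts within ${\bf X}_2$ at order $\mu^0$; this is visible in the paper's own appendix matrices of Eq.~(\ref{Coefficient matrices of Kerr cavity}), where $B_{\pm}$ couples $a^{\dagger}a$ to $a^{\dagger}a^{2}$ with coefficient $2\sqrt{\gamma}$ carrying no factor of $\chi$. Consequently the second-order kernel vanishes identically and the leading nonlinear kernels sit at third order while still being first order in the nonlinearity --- exactly what the paper's Example~1 uses while citing this very theorem. So the caveat you flag at the end is not a defect of your argument but a sharpening the paper's proof silently needs, and your reading of the displayed quadratic kernel as the leading admissible kernel fixed by the lowest nonlinearity is the version of the statement consistent with the paper's own Kerr computation. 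One caution: do not promote ``the $n$-th kernel is $O(\mu^{n-1})$'' to a standalone lemma --- as you yourself observe, it fails for Kerr-type terms, so your grading argument should carry the degree of the lowest nonlinear vertex as a parameter from the outset rather than repairing the count afterwards.
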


The proof of the theorem is given in the appendix. It can be
easily seen that Eq.~(\ref{Volterra series of weak-nonlinear
quantum input-output system}) is just the traditional convolution
representation, or equivalently the transfer function
representation, when $\mu=0$, which corresponds to linear quantum
systems.

\section{Frequency analysis of nonlinear quantum input-output networks}\label{s4}

The Volterra series can be expressed as a simpler form in the
frequency domain, especially for quantum networks with several
components. In the frequency domain, the quantum input-output
relation can be rewritten as
\begin{eqnarray}\label{Volterra series of general quantum input-output system in the frequency domain}
b_{\rm
j,out}^{\pm}\!\left(\omega\right)&=&\sum_{n=1}^{+\infty}\mathop{
\int\cdots\int}_{\omega_1+\cdots \omega_n=\omega} b_{j_1,{\rm
in}}^{\left(i_1\right)}\!\left(\omega_1\right)\cdots b_{j_n,{\rm in}}^{\left(i_n\right)}\!\left(\omega_n\right)\nonumber\\
&& \chi^{\pm,j_1\cdots j_n}_{j,i_1\cdots
i_n}\left(\omega_1,\cdots,\omega_n\right) d\omega_1\cdots\omega_n,
\end{eqnarray}
where
\begin{eqnarray*}
\chi_{j,i_1\cdots i_n}^{\pm,j_1\cdots
j_n}\!\left(\omega_1,\cdots,\omega_n\right)=K^{\pm_1\cdots
j_n}_{j,i_1\cdots
i_n}\left(\omega_1+\cdots\omega_n,\cdots,\omega_n\right)
\end{eqnarray*}
and $K^{\pm,j_1\cdots j_n}_{j,i_1\cdots
i_n}\left(\omega_1,\cdots,\omega_n\right)$ is the $n$-th order
Fourier transform of the kernel function $k^{\pm,j_1\cdots
j_n}_{j,i_1\cdots i_n}\!\left(\tau_1,\cdots,\tau_n\right)$ defined
by
\begin{eqnarray}\label{Generalized Fourier transform}
K^{\pm,j_1\cdots j_n}_{j,i_1\cdots
i_n}\!\left(\omega_1,\cdots,\omega_n\right)&=&\int_{-\infty}^{+\infty}e^{-i\omega_1t_1\cdots-i\omega_nt_n}\nonumber\\
&&k^{\pm,j_1\cdots j_n}_{j,i_1\cdots
i_n}\!\left(t_1,\cdots,t_n\right)dt_1\cdots dt_n.\nonumber\\
\end{eqnarray}
The coefficients $\left\{\chi_{j,i_1\cdots i_n}^{\pm,j_1\cdots
j_n}\right\}$ can be seen as the quantum version of the $n$-th
order nonlinear susceptibility coefficients.

The main merit of the Volterra series approach is that {\it it can
greatly reduce the computational complexity of quantum
input-output network analysis in the frequency domain}. In fact,
from Eq.~(\ref{Volterra series of general quantum input-output
system in the frequency domain}), we can find that the
input-output response of a multi-input nonlinear component is
fully determined by the quantum susceptibility coefficients
$\left\{\chi_{j,i_1\cdots i_n}^{i,j_1\cdots
j_n}\left(\omega_1,\cdots,\omega_n\right)\right\}$. The following
theorem shows that the quantum susceptibility coefficients of a
large-scale quantum network with several components can be
expressed as the polynomial functions of the lower-order quantum
susceptibility coefficients of each component.

\begin{theorem}(Susceptibility coefficients for networks)\label{Susceptibility coefficients for quantum networks}

The $n$-th order quantum susceptibility coefficients
$\left\{\chi_{j,i_1\cdots i_n}^{i,j_1\cdots
j_n}\left(\omega_1,\cdots,\omega_n\right)\right\}$ of a
multi-component quantum network can be expressed as polynomials of
lower-order quantum susceptibility coefficients of each component.

\end{theorem}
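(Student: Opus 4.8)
The plan is to reduce an arbitrary network to a finite sequence of the two elementary interconnections emphasized in Sec.~\ref{s4}---the \emph{concatenation product} and the \emph{series product}---and to show that each of these operations combines the susceptibility coefficients of its constituents polynomially. Since any network is assembled by finitely many such operations, the assertion then follows by induction on the number of components, the base case being the single-component frequency-domain representation~(\ref{Volterra series of general quantum input-output system in the frequency domain}), which already exhibits each component's response as determined by its own susceptibilities $\{\chi^{\pm,j_1\cdots j_n}_{j,i_1\cdots i_n}\}$.

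First I would treat the \emph{series product}, which is the heart of the argument. Let the output $b_{\mathrm{mid}}$ of an upstream component (susceptibilities $\chi^{(1)}$) be fed into a downstream component (susceptibilities $\chi^{(2)}$) producing $b_{\mathrm{out}}$. Writing the downstream response as in~(\ref{Volterra series of general quantum input-output system in the frequency domain}) and then substituting the upstream expansion for every factor $b_{\mathrm{mid}}(\omega'_\ell)$, one obtains a multiple frequency integral in which each factor is an input operator $b_{\mathrm{in}}^{(i)}$. The constraints $\sum_k\omega_k=\omega'_\ell$ collapse the nested integrals, and collecting all contributions with a fixed \emph{total} number $n$ of input factors yields a finite sum: each summand is a product of one factor $\chi^{(2)}$ of some order $m\le n$ with $m$ factors $\chi^{(1)}$ whose orders sum to $n$ (a Fa\`a-di-Bruno--type composition indexed by the compositions of $n$). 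This is manifestly a polynomial in the component susceptibilities. The \emph{concatenation product} is even simpler: because the ports are disjoint, the combined kernels are block-structured, so the combined susceptibilities coincide with those of the individual components and the map is polynomial of degree one.

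The remaining and hardest case is \emph{feedback}, where a subset of output ports is routed back to input ports so that $b_{\mathrm{out}}$ appears implicitly on both sides of~(\ref{Volterra series of general quantum input-output system in the frequency domain}). Here I would proceed order by order: substituting the looped signal and matching terms of each total order $n$ produces a self-consistency equation whose $n$-th order part contains $\chi^{(n)}$ linearly plus a forcing term built only from strictly lower orders. Solving the first-order equation introduces the closed-loop linear susceptibility, a matrix inverse of the form $(I-\chi^{(1)}_{\mathrm{loop}})^{-1}$; admitting this inverse as a given linear building block, every higher-order susceptibility becomes a polynomial in the lower-order ones together with that closed-loop factor. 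I expect the main obstacle to be twofold. First, one must respect the non-commutativity of the operators $b_{\mathrm{in}}^{(i)}(\omega_k)$ when expanding the products, so the collection of like-order terms must preserve operator ordering---equivalently, the symmetrization of kernels must be carried out consistently with the commutation relations underlying Theorem~\ref{Nonlinear quantum input-output response by Volterra series}. Second, for the feedback case one must verify that the order-by-order recursion genuinely \emph{closes}, i.e.\ that the $n$-th order coefficient is fixed by orders $<n$ with no residual self-coupling, so that the recursion is well defined and each step is polynomial.
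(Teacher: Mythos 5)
Your proposal overlaps the paper's proof almost completely where it matters: the paper's argument is precisely your first two steps. It decomposes an arbitrary network into the two Gough--James products and verifies polynomiality for each --- the concatenation product gives the block-structured identity map on susceptibilities, Eq.~(\ref{Susceptibility coefficients of concatenation product}), and the series product gives exactly your Fa\`a-di-Bruno-type composition, a sum over $r=1,\dots,n$ and compositions $\alpha_1+\cdots+\alpha_r=n$ of one $\chi^{(2)}$ factor of order $r$ with $r$ factors of $\chi^{(1)}$, Eq.~(\ref{Susceptibility coefficients of series product}). Where you genuinely diverge is your third case: the paper has no separate feedback reduction at all. In its framework a feedback loop is itself rewritten as an iterated series product --- its own coherent-feedback example treats the loop as the cascade $S = S_1 \vartriangleleft S_3 \vartriangleleft S_2 \vartriangleleft S_1$ --- so the looped signal never appears implicitly on both sides of Eq.~(\ref{Volterra series of general quantum input-output system in the frequency domain}), and the two products exhaust the cases by the asserted decomposition from Gough and James. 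Your instinct that feedback is the hard case is nevertheless pointing at a real tension rather than a defect in your own argument: the closed-loop factor $\left(I-\chi^{(1)}_{\rm loop}\right)^{-1}$ you introduce is a \emph{rational}, not polynomial, function of the component susceptibilities, so if algebraic loops were admitted, the theorem's literal ``polynomial'' claim would have to be weakened or the inverse adjoined as a primitive building block, exactly as you propose. The paper sidesteps this by confining itself to the two products; consequently your two flagged obstacles --- consistency of operator ordering under collection of like-order terms, and order-by-order closure of the feedback recursion --- are legitimate concerns for the case you added, but moot for the proof the paper actually gives.
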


\begin{proof}
To prove our main results, we can see that an arbitrary nonlinear
quantum network can be decomposed into two basic types of
connections between different components, i.e., the concatenation
product and the series product~\cite{Gough}. Thus, we only need to
verify the main results for these two types of basic quantum
networks.
\begin{figure}[t]
\centerline{\includegraphics[width=8.8 cm]{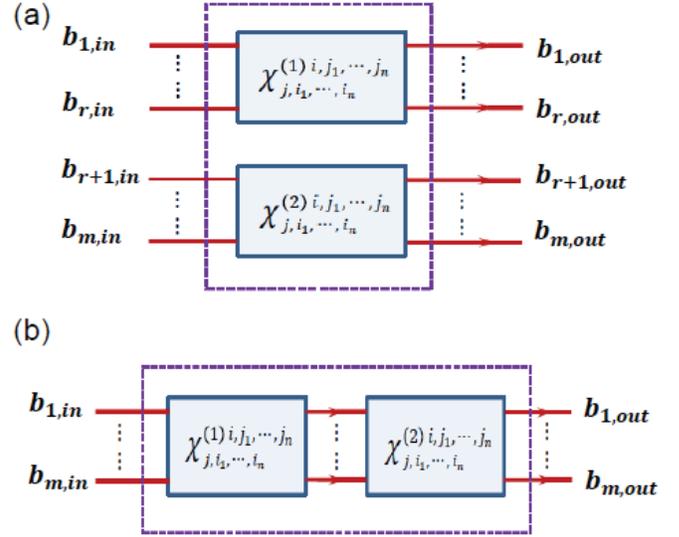}}
\caption{(color online) Two types of basic connections between
different components: (a) the concatenation product, in which two
components are simply assembled together without any connection
between them; and (b) the series product in which two components
are cascade-connected, i.e., the output of the first system is
taken as the input of the second system.
$\left\{\chi^{\left(1\right)\,\,i,j_1\cdots j_n}_{j,i_1\cdots
i_n}\!\left(\tau_1,\cdots,\tau_n\right)\right\}$ and
$\left\{\chi^{\left(2\right)\,\,i,j_1\cdots j_n}_{j,i_1\cdots
i_n}\!\left(\tau_1,\cdots,\tau_n\right)\right\}$ are the quantum
susceptibility coefficients of the two components.}\label{Fig of
the concatenation product and the series product}
\end{figure}

The concatenation product describes two components that are simply
assembled together without any connection between them [see
Fig.~\ref{Fig of the concatenation product and the series
product}(a)]. Let us assume that
$\left\{\chi^{\left(1\right)\,\,i,j_1\cdots j_n}_{j,i_1\cdots
i_n}\!\left(\tau_1,\cdots,\tau_n\right)\right\}$ and
$\left\{\chi^{\left(2\right)\,\,i,j_1\cdots j_n}_{j,i_1\cdots
i_n}\!\left(\tau_1,\cdots,\tau_n\right)\right\}$ are the quantum
susceptibility coefficients of the two components and
$\left\{\chi^{i,j_1\cdots j_n}_{j,i_1\cdots
i_n}\!\left(\tau_1,\cdots,\tau_n\right)\right\}$ are the quantum
susceptibility coefficients of the total system, then it can be
easily verified that
\begin{eqnarray}\label{Susceptibility coefficients of concatenation product}
\lefteqn{
\chi_{j,i_1\cdots i_n}^{i,j_1\cdots j_n}\left(\omega_1,\cdots,\omega_n\right)=}\nonumber\\
&&\left\{%
\begin{array}{ll}
    \chi_{j,i_1\cdots i_n}^{(1)\,\,i,j_1\cdots j_n}\left(\omega_1,\cdots,\omega_n\right), & i_k,j_k\in\left\{1,\cdots,r\right\}; \\
    \chi_{j,i_1\cdots i_n}^{(2)\,\,i,j_1\cdots j_n}\left(\omega_1,\cdots,\omega_n\right), & i_k,j_k\in\left\{r+1,\cdots,m\right\}; \\
    0, & \hbox{{\rm otherwise}.} \\
\end{array}%
\right.\nonumber\\
\end{eqnarray}

The series product can be used to describe two cascade-connected
components [see Fig.~\ref{Fig of the concatenation product and the
series product}(b)], i.e., the output of the first system is taken
as the input of the second system. The $n$-th quantum
susceptibility coefficients $\chi_{j,i_1\cdots i_n}^{i,j_1\cdots
j_n}$ of the quantum network in the series product can be
calculated by the following equation
\begin{eqnarray}\label{Susceptibility coefficients of series product}
\lefteqn{
\chi_{j,i_1\cdots i_n}^{i,j_1\cdots j_n}\left(\omega_1,\cdots,\omega_n\right)=}\nonumber\\
&&\sum_{\tiny\begin{array}{c}
  r=1,\cdots,n \\
  \alpha_1+\cdots+\alpha_r=n \\
\end{array}}
\chi_{j,k_1\cdots k_r}^{\left(2\right)\,\,i,l_1\cdots l_r}\left(\omega_1+\cdots+\omega_{\alpha_1},\cdots\right)\nonumber\\
&&\,\,\,\,\,\,\chi_{l_1,i_1\cdots
i_{\alpha_1}}^{\left(1\right)\,\,k_1,j_1\cdots
j_{\alpha_1}}\left(\omega_1,\cdots,\omega_{\alpha_1}\right)\cdots\nonumber\\
&&\,\,\,\,\,\,\chi_{l_r,i_{n-\alpha_r+1}\cdots
i_n}^{\left(1\right)\,\,k_r,j_{n-\alpha_r+1}\cdots
j_n}\left(\omega_{n-\alpha_r+1},\cdots,\omega_n\right),
\end{eqnarray}
where $\chi_{j,i_1\cdots i_n}^{\left(1\right)\,\,i,j_1\cdots j_n}$
and $\chi_{j,i_1\cdots i_n}^{\left(2\right)\,\,i,j_1\cdots j_n}$
are the quantum susceptibility coefficients of the two components.

From Eq.~(\ref{Susceptibility coefficients of concatenation
product}) and Eq.~(\ref{Susceptibility coefficients of series
product}), we can see that the $n$-th quantum susceptibility
coefficients of a quantum network in the concatenation product and
series product can be expressed as the polynomials of lower-order
quantum susceptibility coefficients of the components in the
quantum network. That completes the proof of the theorem.
\end{proof}

\begin{remark}
Theorem~\ref{Susceptibility coefficients for quantum networks}
shows that the computational complexity to describe the
input-output response of a multi-component nonlinear quantum
network increases linearly with the number of the components in
the quantum network, in comparison to the traditional
exponentially increasing complexity for describing a complex
quantum input-output network.
\end{remark}

\begin{figure}[t]
\centerline{\includegraphics[width=8.8 cm]{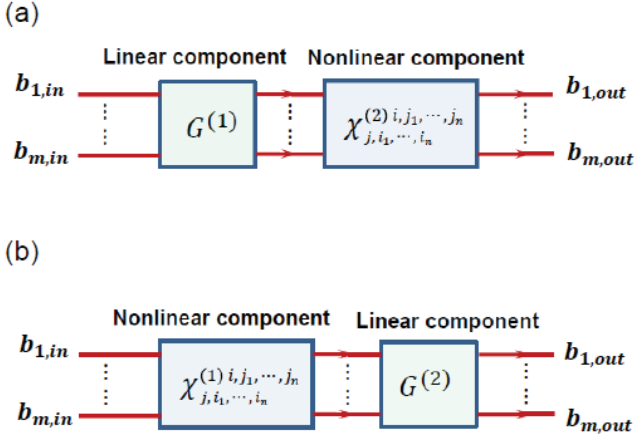}}
\caption{(color online) Series product networks with a linear
component and a nonlinear component: (a) the input field is first
fed into the linear component with quantum transfer function
$G^{\left(1\right)}\!\left(i\omega\right)$ and then transmits
through a nonlinear component with quantum susceptibility
coefficients $\chi_{j,i_1\cdots
i_n}^{\left(2\right)\,\,i,j_1\cdots
j_n}\left(\omega_1,\cdots,\omega_n\right)$; (b) the two components
are connected in the opposite way: the input field is first fed
into a nonlinear component with quantum susceptibility
coefficients $\chi_{j,i_1\cdots
i_n}^{\left(1\right)\,\,i,j_1\cdots
j_n}\left(\omega_1,\cdots,\omega_n\right)$ and then a linear
component with quantum transfer function
$G^{\left(2\right)}\!\left(i\omega\right)$.}\label{Fig of the
series product networks with a linear component and a nonlinear
component}
\end{figure}
As examples of multi-component nonlinear quantum networks, let us
consider a quantum network in which a linear component is
cascade-connected to a nonlinear component. This can be divided
into two different cases (see Fig.~\ref{Fig of the series product
networks with a linear component and a nonlinear component}):

(i) The input field is first fed into a linear component with the
quantum transfer function~\cite{MYanagisawa1}
$G^{\left(1\right)}\!\left(s\right),\,s\in\mathbb{C}$, and then
transmits through a nonlinear component with quantum
susceptibility coefficients $\chi_{j,i_1\cdots
i_n}^{\left(2\right)\,\,i,j_1\cdots
j_n}\left(\omega_1,\cdots,\omega_n\right)$. The quantum
susceptibility coefficients of the total system can be calculated
by
\begin{eqnarray}\label{The
first system in the series product is linear system}
\lefteqn{\chi_{j,i_1\cdots i_n}^{i,j_1\cdots
j_n}\left(\omega_1,\cdots,\omega_n\right)}\nonumber\\
&&{\hspace{0.5cm}}=\chi_{j,i_1\cdots
i_n}^{\left(2\right)\,\,i,j_1\cdots
j_n}\left(\omega_1,\cdots,\omega_n\right)\prod_{i=1}^n
G^{\left(1\right)}\!\left(i\omega_i\right).
\end{eqnarray}

(ii) The input field is first fed into a nonlinear component with
quantum susceptibility coefficients $\chi_{j,i_1\cdots
i_n}^{\left(1\right)\,\,i,j_1\cdots
j_n}\left(\omega_1,\cdots,\omega_n\right)$, and then guided into a
linear component with quantum transfer function
$G^{\left(2\right)}\!\left(s\right),\,s\in\mathbb{C}$. The quantum
susceptibility coefficients of the series-product system can be
expressed as
\begin{eqnarray}\label{The
second system in the series product is linear system}
\lefteqn{\chi_{j,i_1\cdots i_n}^{i,j_1\cdots
j_n}\left(\omega_1,\cdots,\omega_n\right)}
 \nonumber\\
&&{\hspace{0.5cm}}=G^{\left(2\right)}\!\left(i\sum_{i=1}^n
\omega_i\right)\chi_{j,i_1\cdots
i_n}^{\left(1\right)\,\,i,j_1\cdots
j_n}\left(\omega_1,\cdots,\omega_n\right).
\end{eqnarray}

Note that the above examples are quite useful for modelling a
large class of important quantum input-output networks, such as
the network with a nonlinear component cascaded connected to a
quantum amplifier, that cannot be modelled appropriately by the
existing approaches.

\section{Applications}\label{s5}

The Volterra series approach we introduce here can be applied to
various linear and nonlinear quantum input-output systems,
especially those with weak nonlinearity. To show this, we study
the input-output relation of some conventional nonlinear
components, which can be taken as the basic elements of more
complex quantum networks.

\begin{example}(Kerr Cavity)

As a first example, we consider a Kerr cavity with free
Hamiltonian $H=\omega_a
a^{\dagger}a+\chi\left(a^{\dagger}a^{\dagger} a a\right)$ and
dissipation operator $L=\sqrt{\gamma}a$ coupled to the input
field, where $a$ and $a^{\dagger}$ are the annihilation and
creation operators of the cavity. Here $\omega_a$, $\chi$, and
$\gamma$ are the frequency of the fundamental mode, the nonlinear
Kerr coefficient of the cavity, and the coupling strength between
the cavity and the input field (see Fig.~\ref{Fig of the Kerr
cavity}).
\begin{figure}[t]
\centerline{\includegraphics[width=7.6 cm]{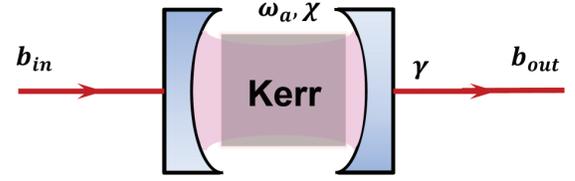}}
\caption{(color online) Schematic diagram of an nonlinear
input-output Kerr cavity with frequency $\omega_a$, nonlinear Kerr
coefficient $\chi$, and damping rate $\gamma$.}\label{Fig of the
Kerr cavity}
\end{figure}
Let us consider the weak-nonlinear assumption such that
$\chi\ll\omega_,\,\gamma$, then from theorem~\ref{Volterra series
for weak-nonlinear system} we can expand the quantum Volterra
series up to the third-order terms. If we further assume that the
cavity is initially in the vacuum state, there is only one nonzero
first-order Volterra kernel
\begin{equation}\label{First-order Volterra kernel of Kerr cavity}
k_-\left(\tau\right)=-\gamma\exp\left[-\left(\frac{\gamma}{2}+i\omega_a\right)\tau\right]
\end{equation}
and four nonzero third-order Volterra kernels
\begin{eqnarray}\label{Third-order Volterra kernels of Kerr cavity}
\lefteqn{k_{\pm-+}\left(\tau_1,\tau_2,\tau_3\right)}\nonumber\\
&&{\hspace{0.3cm}}=\frac{4i\gamma^2\chi^2}{-\gamma+i\chi}e^{-\gamma/2\left(\tau_1+\tau_3\right)-i\omega_a\left(\tau_1-\tau_3\right)-\gamma
\tau_2}\left(1-e^{-\gamma\tau_1}\right),\nonumber\\
\lefteqn{k_{\pm+-}\left(\tau_1,\tau_2,\tau_3\right)}\nonumber\\
&&{\hspace{0.3cm}}=-\frac{4i\gamma^2\chi^2}{-\gamma+i\chi}e^{-\left(\gamma/2+i\omega_a\right)\left(\tau_1+\tau_3\right)-\gamma\tau_2}\left(1-e^{-\gamma\tau_1}\right).\nonumber\\
\end{eqnarray}
See the derivations in the appendix.

For this example, the Volterra series approach gives a more exact
description of the quantum input-output response compared with
other approximation approaches, such as the truncation
approximation approach in the Fock space which is mainly used for
low-excitation quantum systems and the semiclassical
approximation, which is traditionally introduced to study
highly-excited systems. To show this, let us see the simulation
results given in Fig.~\ref{Fig of the evolution of the normalized
position operator}. Given the system parameters
$\left(\chi/\omega_a,\gamma/\omega_a\right)=\left(0.01,0.2\right)$,
the output trajectory obtained by the quantum Voterra series
approach matches more perfectly well with the ideal trajectory
compared with those obtained by the few-photon truncation in the
Fock space and the semiclassical approximation.
\begin{figure}[t]
\centerline{\includegraphics[width=7.6 cm]{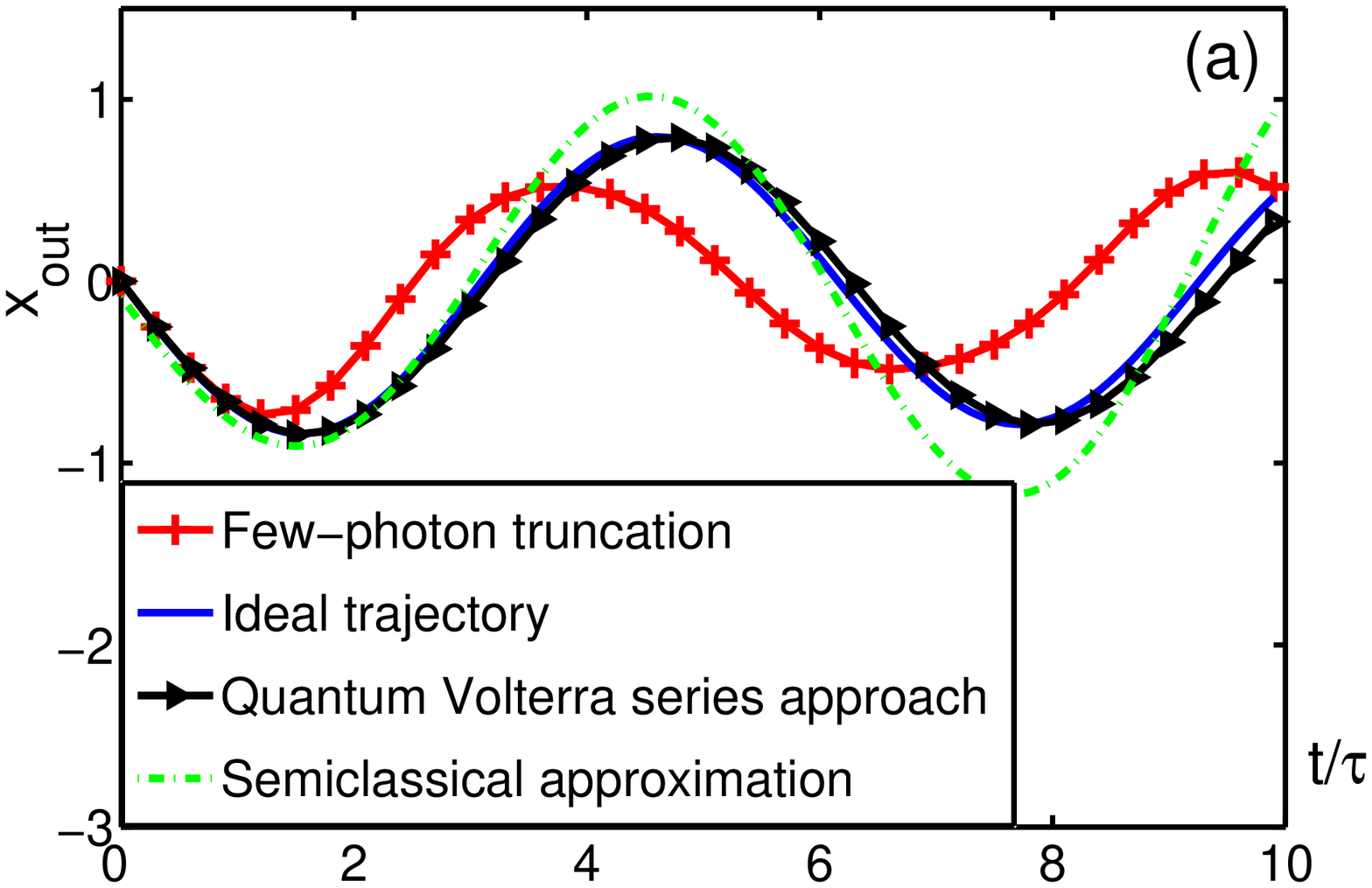}}
\centerline{\includegraphics[width=7.6 cm]{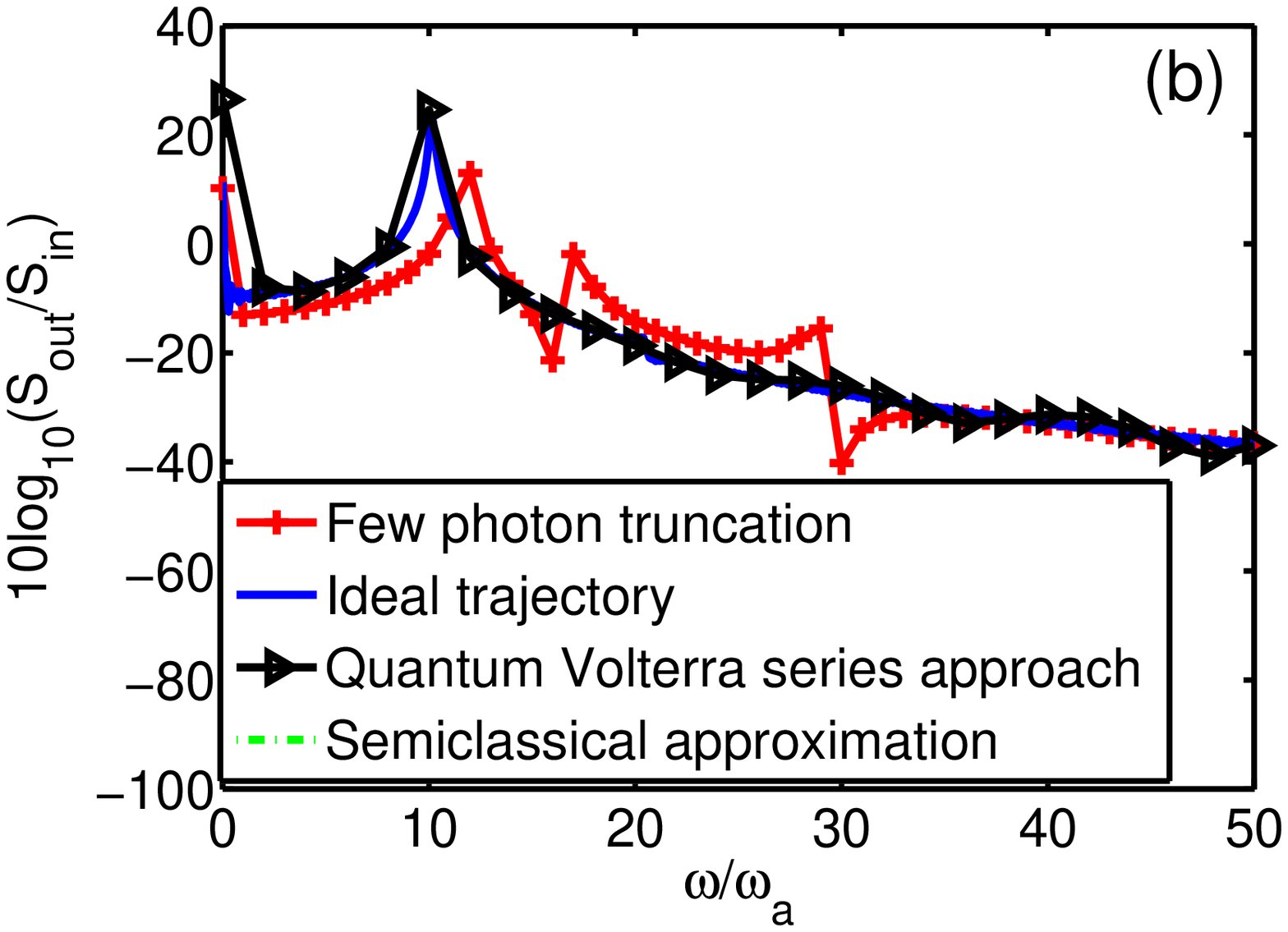}}
\caption{(color online) (a) Time evolution of the output field
$x_{\rm out}=\left(b_{\rm out}+b_{\rm
out}^{\dagger}\right)/\sqrt{2}$ and (b) logarithmic output spectra
for a Kerr cavity with
$\left(\chi/\omega_a,\gamma/\omega_a\right)=\left(0.01,0.2\right)$.
In order to obtain the output spectrum, we drive the Kerr cavity
by an external field with strength $\epsilon_d=0.6\,\omega_a$.
Here $\tau=2\pi/\omega_a$ is a normalized unit of time. The blue
solid curve is the ideal trajectory. The black triangle curve, the
green dashed curve, and the red curve with plus signs are the
trajectories obtained by the Volterra series approach,
semiclassical approximation, and the few-photon truncation with
expansion up to five-photon Fock state. The trajectory obtained by
the Volterra series approach coincides very well with the ideal
one compared with the other two approaches.}\label{Fig of the
evolution of the normalized position operator}
\end{figure}

\end{example}
\vspace{6pt}

\begin{example}(Optomechanical transducer)

In the second example, let us concentrate on a single-mode cavity
parametrically coupled to a mechanical oscillator (see
Fig.~\ref{Fig of the optomechanical transducer}), which received a
high degree of attention
recently~\cite{MPoot,GJMilburn,RHamerlyPRL:2012,XYLvSR:2013,JQLiaoPRA:2012,XWXuPRA:2013}.
These systems can be used as sensitive detectors to detect spin
and mass, or a sensitive mechanical transducer. The Hamiltonian of
this system can be written as $H=\omega_a a^{\dagger}a+\omega_b
b^{\dagger}b +ga^{\dagger}a(b+b^{\dagger})$, where $a$
($a^{\dagger}$) and $b$ ($b^{\dagger}$) are the annihilation
(creation) operators of the cavity mode and the mechanical
oscillator; $\omega_a$ and $\omega_b$ are the angular frequencies
of these two modes; and $g$ is the optomechanical coupling
strength. The cavity mode is coupled to the input field with
damping rate $\gamma_a$, and $\gamma_b$ is the damping rate of the
mechanical mode.
\begin{figure}[t]
\centerline{\includegraphics[width=7.6 cm]{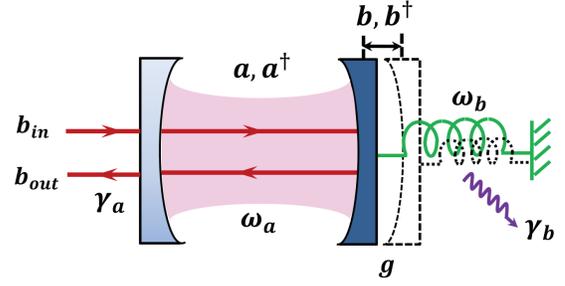}}
\caption{(color online) Schematic diagram of an optomechanical
transducer where $a$, $\omega_a$, and $\gamma_a$ ($b$, $\omega_b$,
and $\gamma_b$) are the annihilation operators of the optical mode
and the mechanical mode. $g$ is the optomechanical coupling
strength.}\label{Fig of the optomechanical transducer}
\end{figure}
Let us assume that the optomechanical coupling is weak enough such
that $g\ll\omega_{a,b},\,\gamma_{a,b}$. Note that $g$ determines
the nonlinearity of the optomechanical systems, thus the above
assumption means that the nonlinearity of the optomechanical
system we consider is weak. From theorem~\ref{Volterra series for
weak-nonlinear system}, we can expand the quantum Volterra series
to the third-order terms and omit higher-order terms. If we
further assume that the cavity and the mechanical oscillator are
both initially in the vacuum states and note that
$\omega_a\gg\omega_b$ and $\gamma_a\gg\gamma_b$, there are only
one non-zero first-order Volterra kernel
$k_-\left(\tau\right)=-\gamma_a\exp\left[-\left(\frac{\gamma_{a}}{2}+i\omega_a\right)\tau\right]$
and two non-zero third-order Volterra kernels
\begin{eqnarray}\label{Third-order Volterra kernels of optomechanical components}
\lefteqn{k_{-\pm\mp}\left(\tau_1,\tau_2,\tau_3\right)=}\nonumber\\
&&\gamma^2_a g^2
e^{-\gamma^a_{\pm}\tau_3}\left(\frac{e^{-\gamma^a_{-}\tau_1}+e^{-\gamma^a_{+}\tau_1}}{-\gamma^a_{-}+\gamma^a_{+}}\right)
\left(\frac{e^{-\gamma_a\tau_2}-e^{-\gamma^b_{-}\tau_2}}{-\gamma^b_{-}+\gamma_a}\right),\nonumber\\
\end{eqnarray}
where $\gamma^a_{\pm}=\frac{\gamma_a}{2}\pm i\omega_a$ and
$\gamma^b_-=\frac{\gamma_b}{2}-i \omega_b$. The derivations of
Eq.~(\ref{Third-order Volterra kernels of optomechanical
components}) are similar to those of Eqs.~(\ref{First-order
Volterra kernel of Kerr cavity}) and (\ref{Third-order Volterra
kernels of Kerr cavity}) given in the appendix, thus we omit those
here.

\end{example}
\vspace{6pt}

\begin{example} (Nonlinear coherent feedback network with weak Kerr nonlinearity)

The Volterra series approach gives a simpler way to analyze
nonlinear quantum coherent feedback control
systems~\cite{JZhang2,RHamerlyPRL:2012,HMabuchiAPL:2011,JKerckhoffPRL:2012,JKerckhoffPRL:2010,ZZhouAPL:2012,ZPLiuPRA:2013}.
To show this, let us consider a simple coherent feedback system in
Fig.~\ref{Fig of the evolution of the normalized position
operator}(a). In this system, the controlled system is a linear
cavity $S_1$, and in the feedback loop there are a quantum
amplifier $S_2$ and a Kerr nonlinear component $S_3$. Notice that
this coherent feedback system cannot be modelled by the existing
approaches such as the Hudson-Parthasarathy model and the quantum
transfer function model, but we can describe it by our approach.
The total system can be seen as a cascade-connected system $S=S_1
\vartriangleleft S_3 \vartriangleleft S_2 \vartriangleleft S_1$.
The system dynamics can be obtained from Eqs.~(\ref{The first
system in the series product is linear system}), (\ref{The second
system in the series product is linear system}), and example 1.
This quantum coherent feedback loop induces an interesting
phenomenon: the nonlinear component in the coherent feedback loop
changes the dynamics of the linear cavity and make it a nonlinear
cavity. This nonlinear effect is additionally amplified by the
quantum amplifier in the feedback loop. Let $\omega_a$, $\gamma_a$
be the effective frequency and damping rate of the controlled
linear cavity. $\omega_b$, $\chi_b$, $\gamma_b$ are the effective
frequency, nonlinear Kerr coefficient, and damping rate of the
nonlinear Kerr cavity in the feedback loop. $G$ is the power gain
of the quantum amplifier in the feedback loop. Under the condition
that $\gamma_a\gg\omega_a$, the controlled cavity can be seen as a
nonlinear Kerr cavity with effective Kerr coefficient
$\tilde{\chi}_a=G\chi_b$. This amplified nonlinear Kerr effects
leads to nonlinear quantum phenomena in the controlled cavity. For
example, if the initial state of the controlled cavity is a
coherent state, this state will evolve into a non-Gaussian state,
which is highly nonclassical. In Fig.~\ref{Fig of the evolution of
the normalized position operator}(b), we use the measure
$$\delta\left[\rho\right]=\frac{{\rm
tr}\left[(\rho-\sigma)^2/2\right]}{{\rm
tr}\left[\rho^2\right]}\in[0,1]$$ to evaluate the non-Gaussian
degree of the quantum states generated in the controlled
cavity~\cite{Genoni}, where $\sigma$ is a Gaussian state with the
same first and second-order quadratures of the non-Gaussian state
$\rho$. Simulation results in Fig.~\ref{Fig of the evolution of
the normalized position operator}(b) show that higher-quality
non-Gaussian states can be obtained if we increase the power gain
$G$ of the quantum amplifier in the feedback loop. We should point
out that we have predicted a similar quantum feedback
nonlinearization phenomenon in Ref.~\cite{JZhang2}. But in that
paper, the nonlinearity is induced by the nonlinear dissipation
interaction between the controlled system and the mediated quantum
field, and the feedback loop is linear. Here, we show that
nonlinear coherent feedback loop can induce quantum
nonlinearity~\cite{ZPLiuPRA:2013}, which can be further amplified
by the quantum amplifier in the feedback loop.
\begin{figure}[t]
\centerline{\includegraphics[width=7.6 cm]{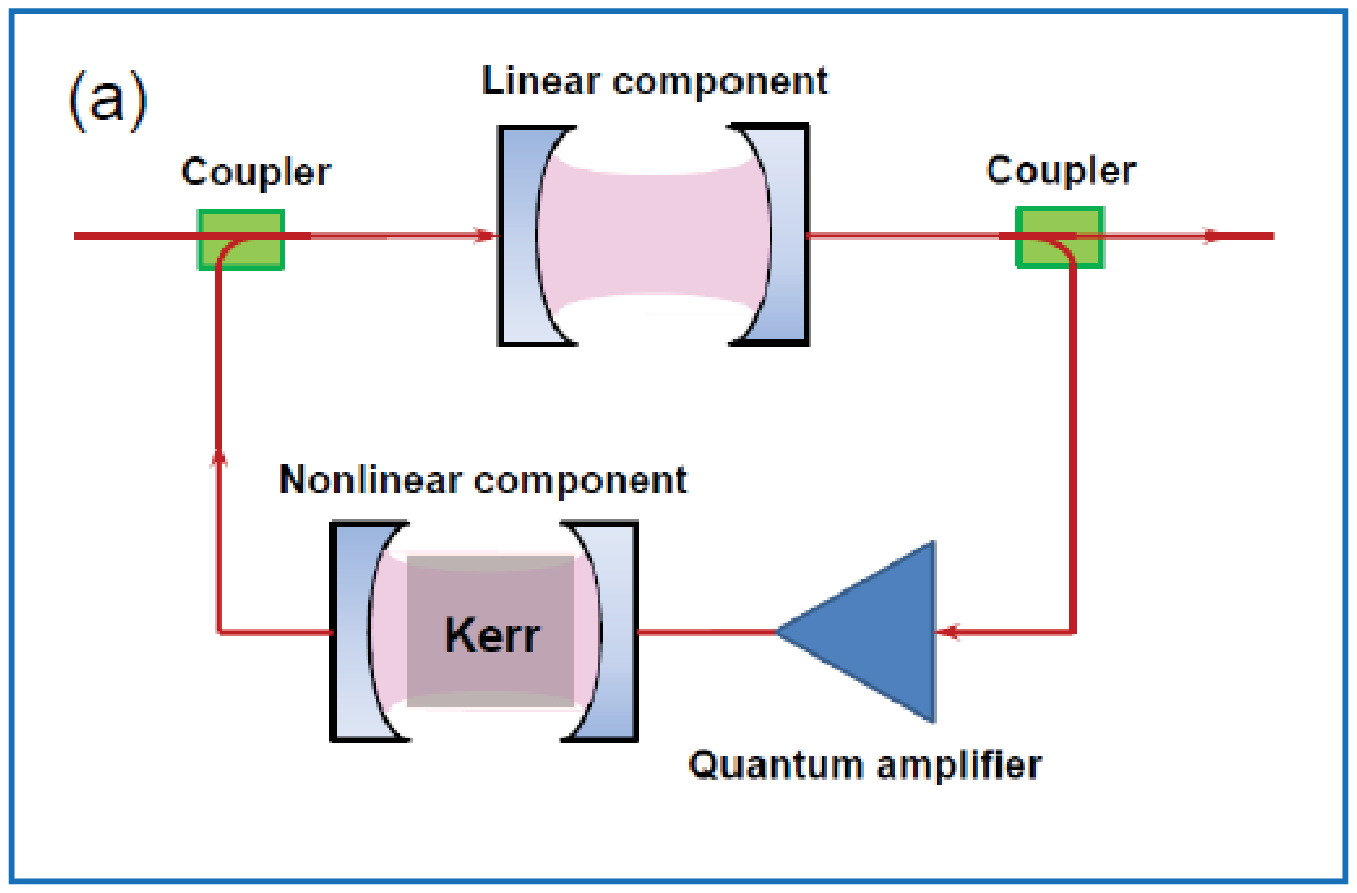}}
\centerline{\includegraphics[width=7.6 cm]{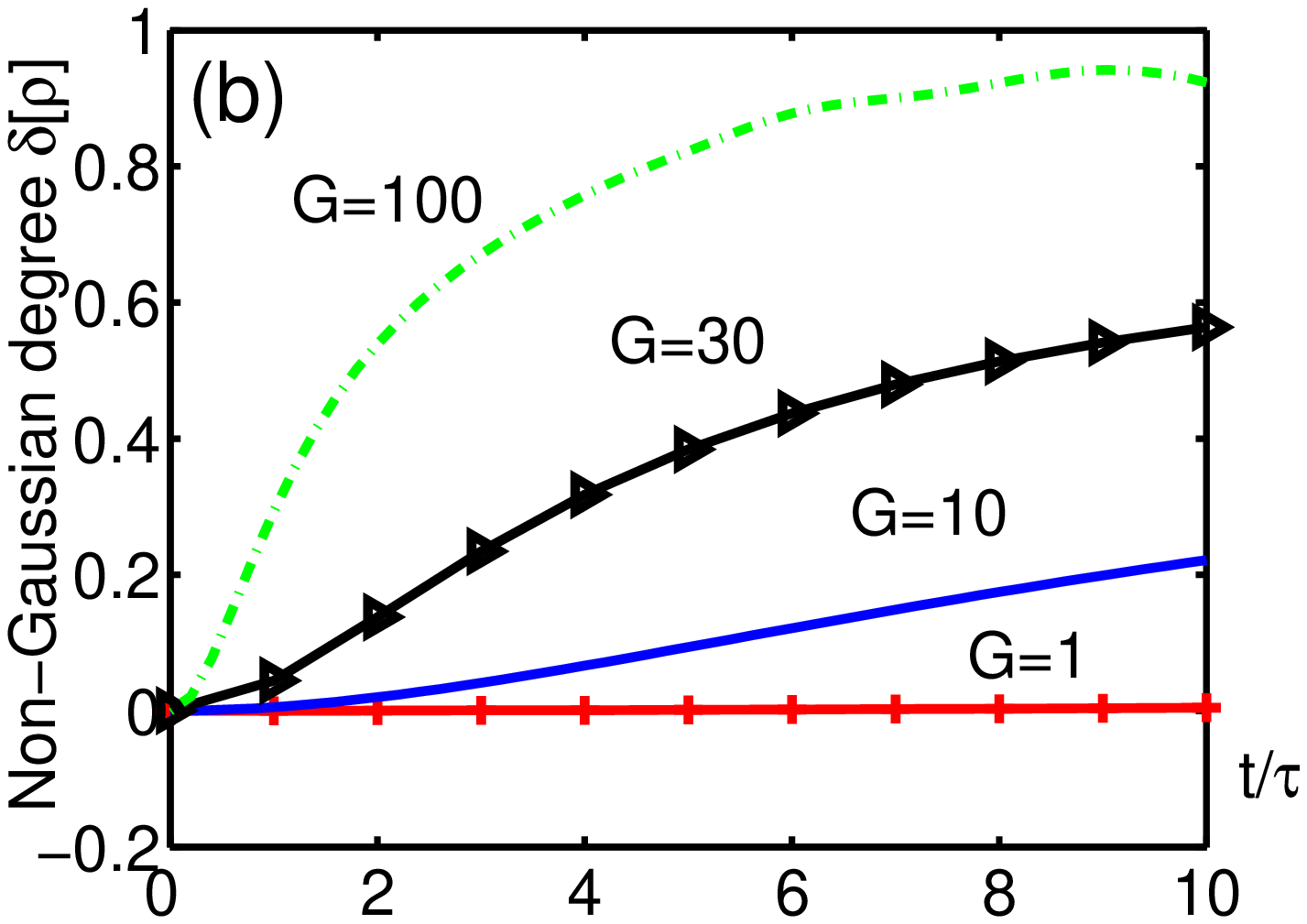}}
\caption{(color online) Quantum feedback nonlinearization: (a) the
schematic diagram of the nonlinear coherent feedback loop; (b) the
non-Gaussian degree of the quantum state in the controlled
cavity.}\label{Fig of the evolution of the normalized position
operator}
\end{figure}

\end{example}

\section{Conclusion}\label{s6}


%
%

We have introduced a new formalism of quantum input-output
networks using the so-called Volterra series. It gives a simpler
way to describe large-scale nonlinear quantum input-output
networks especially in the frequency domain, and can be also used
to analyze more general quantum networks with both nonlinear
components and quantum amplifiers that cannot be modelled by the
existing methods such as the Hudson-Parthasarathy model and the
quantum transfer function model. An application to quantum
coherent feedback systems shows that it can be used to show the
quantum feedback nonlinearization effects, in which the nonlinear
components in the coherent feedback loop can change the dynamics
of the controlled linear system and these quantum nonlinear
effects can be amplified by a linear quantum amplifier. Our work
opens up new perspectives in nonlinear quantum networks,
especially quantum coherent feedback control systems.

\appendix
{\it Proof of the theorem~\ref{Volterra series for weak-nonlinear
system}:} Let us assume that ${\bf X}_1$ is an operator vector of
which the components are linear terms of the annihilation and
creation operators $a_i$ and $a_i^{\dagger}$, $i=1,\cdots,r$, and
${\bf X}_2$ is another operator vector of which the components are
higher-order nonlinear terms of $a_i$ and $a_i^{\dagger}$. The
system dynamics can be fully determined by the vector ${\bf
X}=\left({\bf X}_1^T,{\bf X}_2^T\right)^T$. From Eq.~(\ref{Quantum
stochastic differential equation}) and the special form of the
system Hamiltonian $H$ and dissipation operator ${\bf L}$ in
Eq.~(\ref{Hamiltonian and dissipation operator of weak-nonlinear
quantum system}), we can obtain the dynamical equations of ${\bf
X}_1$ and ${\bf X}_2$ as follows
\begin{eqnarray}
\dot{{\bf X}}_1&=& \left( A_{11}{\bf X}_1 +\mu A_{12}{\bf X}_2
\right)+\left(B_1^*{\bf
b}_{\rm in}+B_1{\bf b}_{\rm in}^{\dagger}\right),\label{X1}\\
\dot{{\bf X}}_2&=& \left(A_{21}{\bf X}_1+A_{22}{\bf
X}_2\right)+\left(B_{21}^*{\bf b}_{\rm in}+B_{21}{\bf b}_{\rm
in}^{\dagger}\right){\bf X}_1\nonumber\\
&&+\mu\left(B_{22}^*{\bf b}_{\rm in}+B_{22}{\bf b}_{\rm
in}^{\dagger}\right){\bf X}_2,\label{X2}
\end{eqnarray}
where $A_{ij}$, $B_{ij}$ are constant matrices determined by
$H_l$, $H_{nl}$, ${\bf L}_l$, and ${\bf L}_{nl}$. By solving
Eqs.~(\ref{X1}) and (\ref{X2}), we have
\begin{eqnarray}
{\bf X}_1\!\left(t\right)&=&e^{A_{11}t}{\bf X}_1+\mu\int_0^t\!\!
e^{A_{11}\left(t-\tau\right)}A_{12}{\bf
X}_2\!\left(\tau\right)d\tau\nonumber\\
&&+\int_0^t\!\! e^{A_{11}\left(t-\tau\right)}\left[B_1^*{\bf
b}_{\rm in}\!\left(\tau\right)+B_1{\bf b}_{\rm
in}^{\dagger}\!\left(\tau\right)\right]d\tau,\nonumber\\ \label{X1 solution}\\
{\bf X}_2\!\left(t\right)&=&e^{A_{22}t}{\bf X}_2+\int_0^t\!\!
e^{A_{22}\left(t-\tau\right)}A_{21}{\bf
X}_1\!\left(\tau\right)d\tau\nonumber\\
&&+\int_0^t\!\!\left[B_{21}^*{\bf b}_{\rm
in}\!\left(\tau\right)+B_{21}{\bf b}_{\rm
in}^{\dagger}\!\left(\tau\right)\right]X_1\!\left(\tau\right)d\tau\nonumber\\
&&+\mu\int_0^t\!\!\left[B_{22}^*{\bf b}_{\rm
in}\!\left(\tau\right)+B_{22}{\bf b}_{\rm
in}^{\dagger}\!\left(\tau\right)\right]X_2\!\left(\tau\right)d\tau.\nonumber\\
\label{X2 solution}
\end{eqnarray}
Noticing that
\begin{eqnarray*}
{\bf b}_{\rm out}\!\left(t\right)={\bf b}_{\rm
in}\!\left(t\right)+{\bf l}_1\langle
X_1\!\left(t\right)\rangle_0+\mu {\bf l}_2\langle
X_2\!\left(t\right)\rangle_0,
\end{eqnarray*}
we can obtain Eq.~(\ref{Volterra series of weak-nonlinear quantum
input-output system}) by iterating Eq.~(\ref{X2 solution}) into
Eq.~(\ref{X1 solution}).

{\it Derivations of Eqs.~(\ref{First-order Volterra kernel of Kerr
cavity}) and (\ref{Third-order Volterra kernels of Kerr cavity}):}
Under the weak nonlinearity assumption $\chi\ll\omega_a,\gamma$,
we can expand the system dynamics of the Kerr cavity up to the
third-order terms of $a$ and $a^{\dagger}$ and omit higher-order
terms. Then, the system dynamics can be expressed as the following
quantum stochastic differential equation
\begin{eqnarray*}
\dot{\bf X}=A{\bf X}+B_-{\bf X}b_{\rm in}+B_+{\bf X}b_{\rm
in}^{\dagger},
\end{eqnarray*}
where ${\bf
X}=\left(I,a,a^{\dagger},a^{\dagger}a,a^2,a^{\dagger\,2},a^{\dagger}a^2,aa^{\dagger\,2}\right)^T$.
The coefficient matrices $A$, $B_-$, and $B_+$ can be given by
Eq.~(\ref{Coefficient matrices of Kerr cavity}), where
$\tilde{\omega}_a=\omega_a-\chi$.
\begin{figure*}
\begin{eqnarray}\label{Coefficient matrices of Kerr cavity}
&A=\left(%
\begin{array}{cccccccc}
  0 & 0 & 0 & 0 & 0 & 0 & 0 & 0 \\
  0 & -\left(\frac{\gamma}{2}+i\omega_a\right) & 0 & 0 & 0 & 0 & -2i\chi & 0 \\
  0 & 0 & -\left(\frac{\gamma}{2}-i\omega_a\right) & 0 & 0 & 0 & 0 & 2i\chi \\
  0 & 0 & 0 & -\gamma & 0 & 0 & 0 & 0 \\
  0 & 0 & 0 & 0 & -\left(\gamma+2i\omega_a\right) & 0 & 0 & 0 \\
  0 & 0 & 0 & 0 & 0 & -\left(\gamma-2i\omega_a\right) & 0 & 0 \\
  0 & 0 & 0 & 0 & 0 & 0 & -\left[\frac{3\gamma}{2}+i\tilde{\omega}_a\right] & 0 \\
  0 & 0 & 0 & 0 & 0 & 0 & 0 & -\left[\frac{3\gamma}{2}-i\tilde{\omega}_a\right] \\
\end{array}%
\right),&\nonumber\\
&B_-=\left(%
\begin{array}{cccccccc}
  0 & 0 & 0 & 0 & 0 & 0 & 0 & 0 \\
  -\sqrt{\gamma} & 0 & 0 & 0 & 0 & 0 & 0 & 0 \\
  0 & 0 & 0 & 0 & 0 & 0 & 0 & 0 \\
  0 & 0 & -\sqrt{\gamma} & 0 & 0 & 0 & 0 & 0 \\
  0 & -2\sqrt{\gamma} & 0 & 0 & 0 & 0 & 0 & 0 \\
  0 & 0 & 0 & 0 & 0 & 0 & 0 & 0 \\
  0 & 0 & 0 & 2\sqrt{\gamma} & 0 & 0 & 0 & 0 \\
  0 & 0 & 0 & 0 & 0 & -\sqrt{\gamma} & 0 & 0 \\
\end{array}%
\right),&\nonumber\\
&B_+=\left(%
\begin{array}{cccccccc}
  0 & 0 & 0 & 0 & 0 & 0 & 0 & 0 \\
  0 & 0 & 0 & 0 & 0 & 0 & 0 & 0 \\
  -\sqrt{\gamma} & 0 & 0 & 0 & 0 & 0 & 0 & 0 \\
  0 & -\sqrt{\gamma} & 0 & 0 & 0 & 0 & 0 & 0 \\
  0 & 0 & 0 & 0 & 0 & 0 & 0 & 0 \\
  0 & 0 & -2\sqrt{\gamma} & 0 & 0 & 0 & 0 & 0 \\
  0 & 0 & 0 & 2\sqrt{\gamma} & 0 & 0 & 0 & 0 \\
  0 & 0 & 0 & 0 & -\sqrt{\gamma} & 0 & 0 & 0 \\
\end{array}%
\right).&
\end{eqnarray}
\end{figure*}
With a similar iteration process as the one given by Eq.~(\ref{Iterative
expansion of the operator vector}), we can calculate the Volterra
kernels using
\begin{eqnarray*}
k_-\left(\tau\right)=l^T\exp\left(A\tau\right)B_-x_0,
\end{eqnarray*}
and
\begin{eqnarray*}
k_{\pm\pm\pm}(\tau_1,\tau_2,\tau_3)=l^Te^{A\tau_1}B_{\pm}e^{A\tau_2}B_{\pm}e^{A\tau_3}B_{\pm}x_0,
\end{eqnarray*}
where $l=\left(010\cdots0\right)^T$ and $x_0=\langle {\bf X}
\rangle_0=\left(10\cdots0\right)^T$. $\langle\cdot\rangle_0$ is
the average taking over the initial vacuum state of the Kerr
cavity.

\section*{Acknowledgment}
J. Zhang would like to thank Prof. G.~F. Zhang for his helpful comments.

\begin{biography}[{\includegraphics[width=1in,height=1.25in,clip,keepaspectratio]{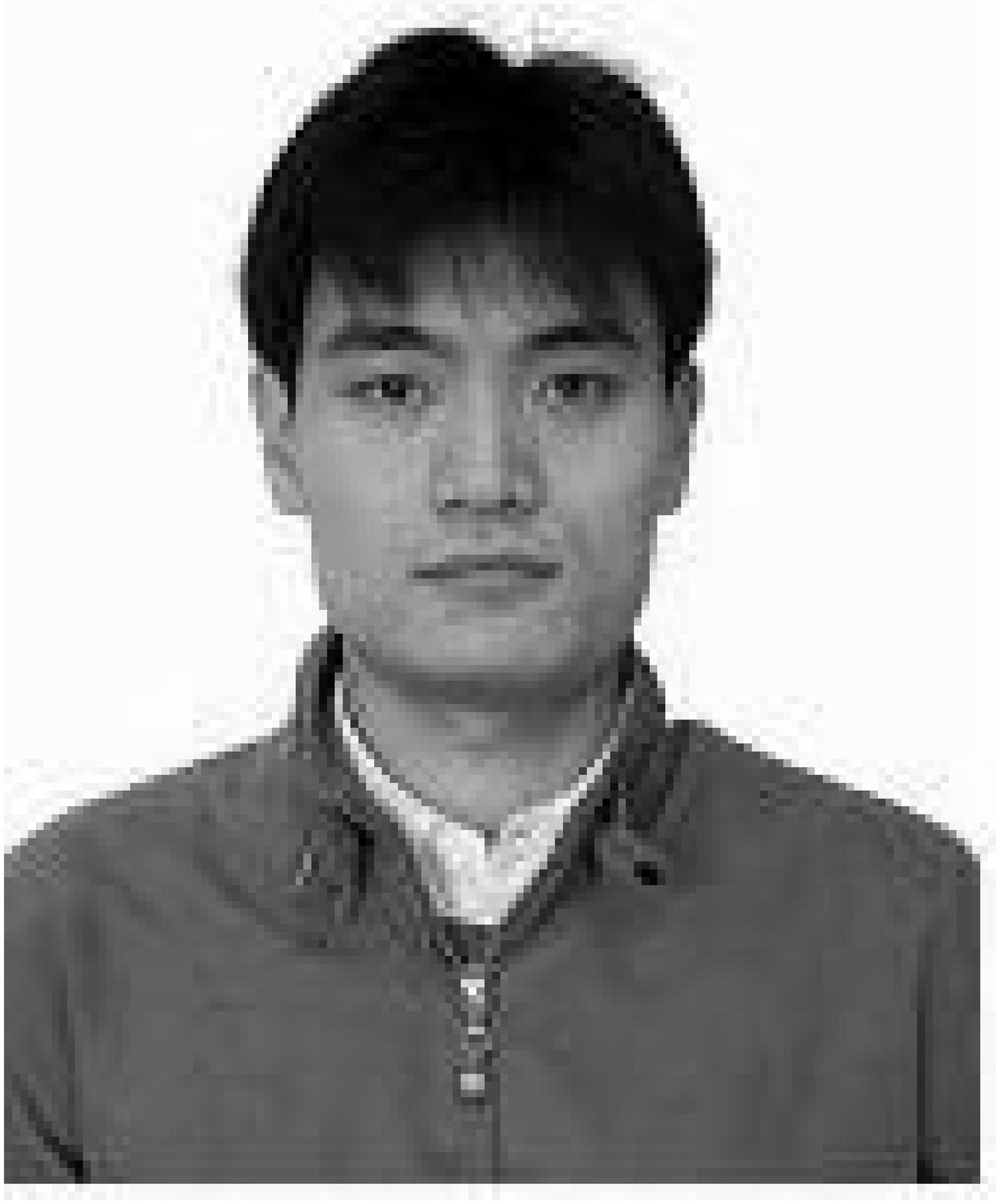}}]{Jing Zhang}
received his B.S. degree from Department of Mathematical Science
and Ph.D. degree from Department of Automation, Tsinghua
University, Beijing, China, in 2001 and 2006, respectively.

From 2006 to 2008, he was a Postdoctoral Fellow at the Department
of Computer Science and Technology, Tsinghua University, Beijing,
China, and a Visiting Researcher from 2008 to 2009 at the Advanced
Science Institute, the Institute of Physical and Chemical Research
(RIKEN), Japan. In 2010, he worked as a Visiting Assistant
Professor at Department of Physics and National Center for
Theoretical Sciences, National Cheng Kung University, Taiwan. He
is now an Associate Professor at the Department of Automation,
Tsinghua University, Beijing, China. His research interests
include quantum control and nano manipulation.
\end{biography}

\begin{biography}[{\includegraphics[width=1in,height=1.25in,clip,keepaspectratio]{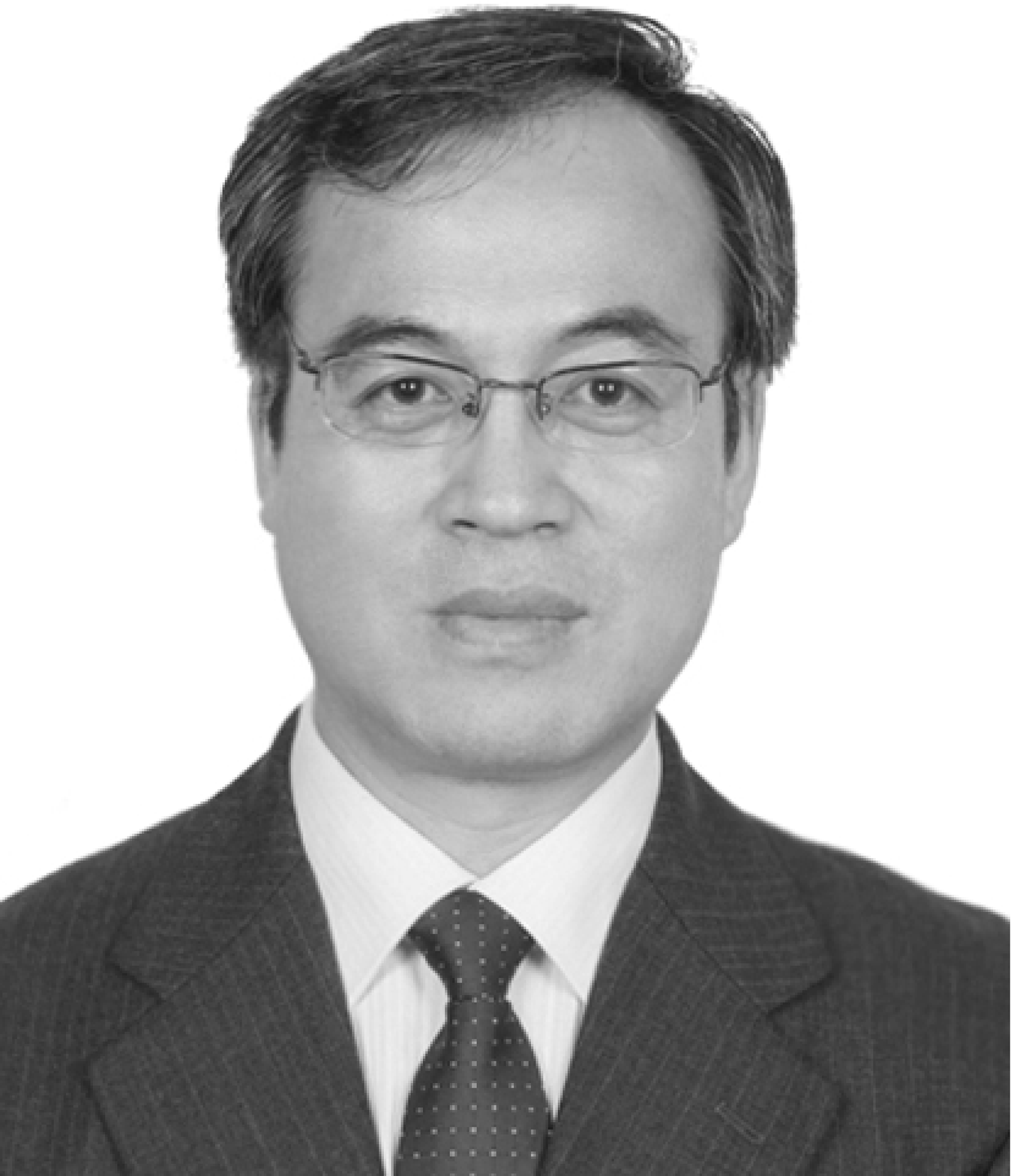}}]{Yu-xi Liu}
received his B.S. degree, M.S. degree and Ph.D. degree from
Department of Physics, Shanxi Normal University, Jilin University
and Peking University in 1989, 1995 and 1998, respectively.

From 1998 to 2000, he was a Post-doctor at the Institute of
Theoretical Physics, the Chinese Academy of Sciences, China. From
2000 to 2002, he was a JSPS Postdoctoral fellow at the Graduate
University for Advanced Studies (SOKENDAI), Japan. From 2002 to
2009, he was a research scientist in the Institute of Physical and
Chemical Research (RIKEN), Japan.

Since 2009, he has been a Professor with Institute of
Microelectronics, Tsinghua University. His research interests
include solid state quantum devices, quantum information
processing, quantum optics and quantum control theory.
\end{biography}

\begin{biography}[{\includegraphics[width=1in,height=1.25in,clip,keepaspectratio]{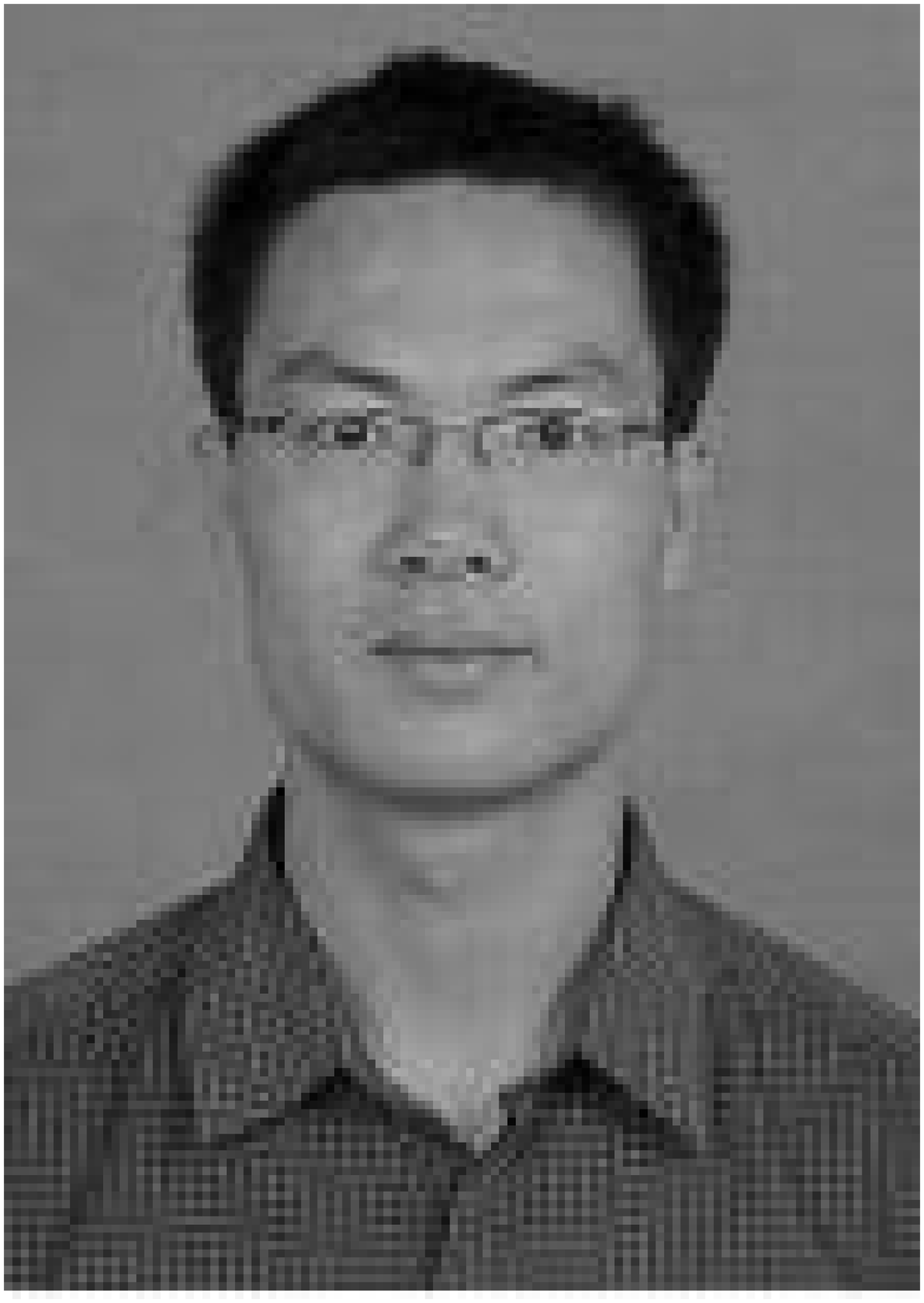}}]{Re-Bing Wu}
received his B.S. degree in Electrical Engineering and Ph.D.
degree in Control Science and Engineering from Tsinghua
University, Beijing, China, in 1998 and 2004, respectively.

From 2005 to 2008, he was a Research Associate Fellow at the
Department of Chemistry, Princeton University, USA. Since 2009, he
has been an Associate Professor at the Department of Automation,
Tsinghua University, Beijing, China. His research interests
include quantum mechanical control theory and nonlinear control
theory.
\end{biography}

\begin{biography}[{\includegraphics[width=1in,height=1.25in,clip,keepaspectratio]{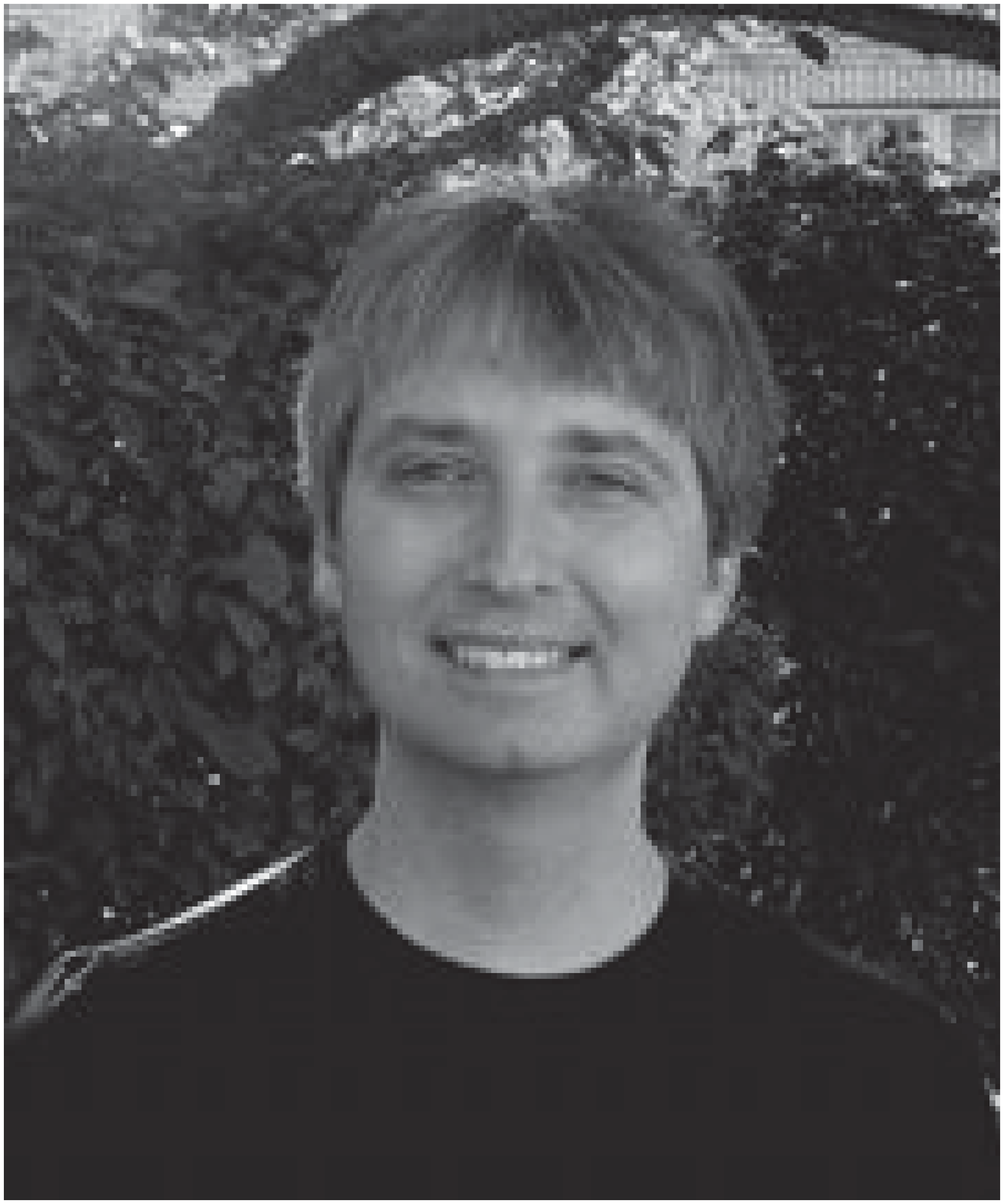}}]{Kurt Jacobs}
received a B.S. and M.S. degree in Physics from the University of Auckland, and a Ph.D. degree in Physics from Imperial College, London,
which he completed in 1998. He then held postdoctoral positions at Los Alamos National Laboratory, Griffith University, and Louisiana State University,
before joining the University of Massachusetts at Boston as an Assistant Professor. He has been an Associate Professor there since 2011.
His research interests include quantum measurement theory, feedback control in mesoscopic systems, and quantum thermodynamics. He is the author of Stochastic Processes for Physicists, and Quantum Measurement Theory and Its Applications, from Cambridge University Press.
\end{biography}

\begin{biography}[{\includegraphics[width=1in,height=1.25in,clip,keepaspectratio]{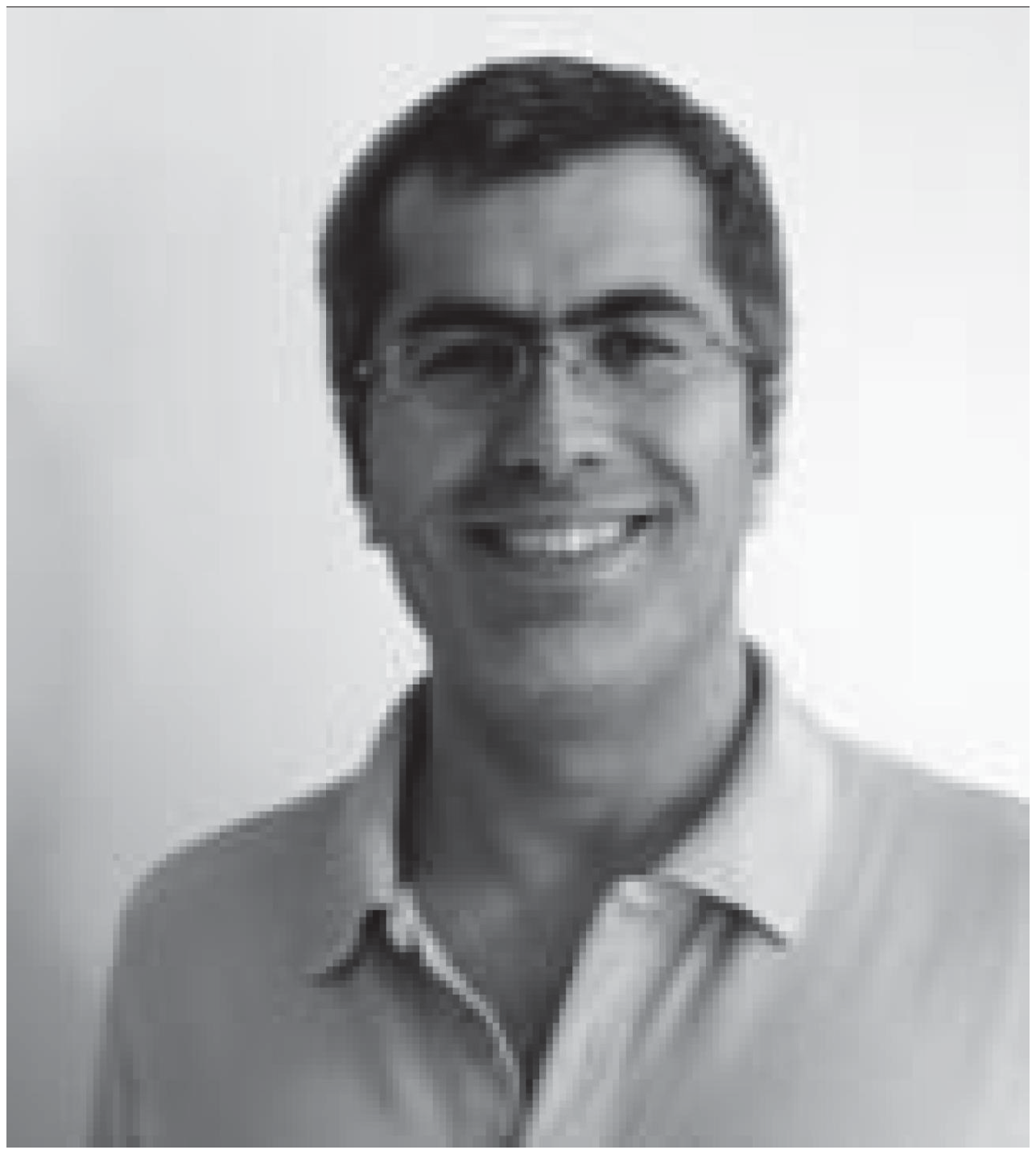}}]{Sahin Kaya Ozdemir}
received his B.S. degree and M.S. degree in Department of Electrical and Electronics Engineering,
Middle East Technical University (METU), Ankara, Turkey, and Ph.D. degree in Department of Electrical and Electronics Engineering, Graduate School of Electronic
Science and Technology, Shizuoka University, Hamamatsu, Japan, in 1992, 1995, and 2000, respectively.

From 2000 to 2003, he was a Post-doctoral researcher at Quantum Optics and Quantum Information Lab., Dept. of Photoscience, School of
Advanced Sciences, The Graduate University for Advanced Studies (Sokendai), Hayama,
Japan. Since 2004, he has been an Post-doctoral Research Associate at Micro/Nano-Photonics Lab., Department of Electrical and Systems Engineering,
Washington University in St. Louis. His research interests include micro/nano phtonics, quantum optics, and quantum information processing.
\end{biography}

\begin{biography}[{\includegraphics[width=1in,height=1.25in,clip,keepaspectratio]{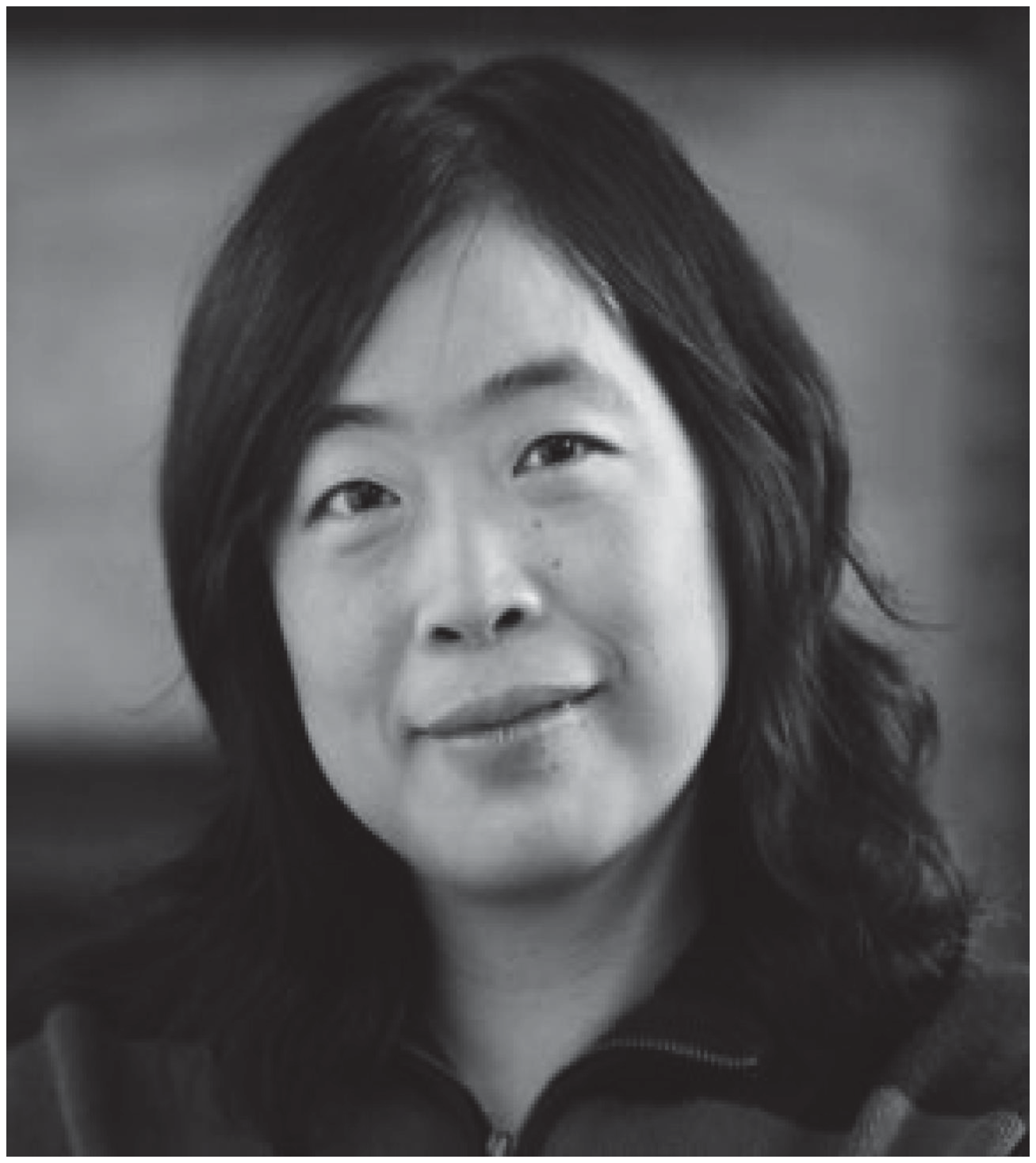}}]{Lan Yang}
received her B.S. degree in Materials Physics and M.S. degree in Solid State Physics from University of Science and Technology of China in 1997 and 1999,
and received her M.S. degree in Materials Science and Ph.D. degree in Applied Physics from Caltech in 2000 and 2005.

From 2005 to 2006, she was a Post-doctoral Scholar/Research Associate at Department of Applied Physics, Caltech. From
2007 to 2012, she was an Das Family Distinguished Career Development Assistant Professor at the Preston M. Green Department of Electrical and Systems Engineering, Washington University in St. Louis.
Since 2012, she has been an Associate Professor at the Preston M. Green Department of Electrical and Systems Engineering, Washington University in St. Louis.
She won the NSF CAREER Award and the Presidential Early Career Award for Scientists and Engineers (PECASE )in 2010. Her research interests include micro/nano phtonics and quantum optics.
\end{biography}

\begin{biography}[{\includegraphics[width=1in,height=1.25in,clip,keepaspectratio]{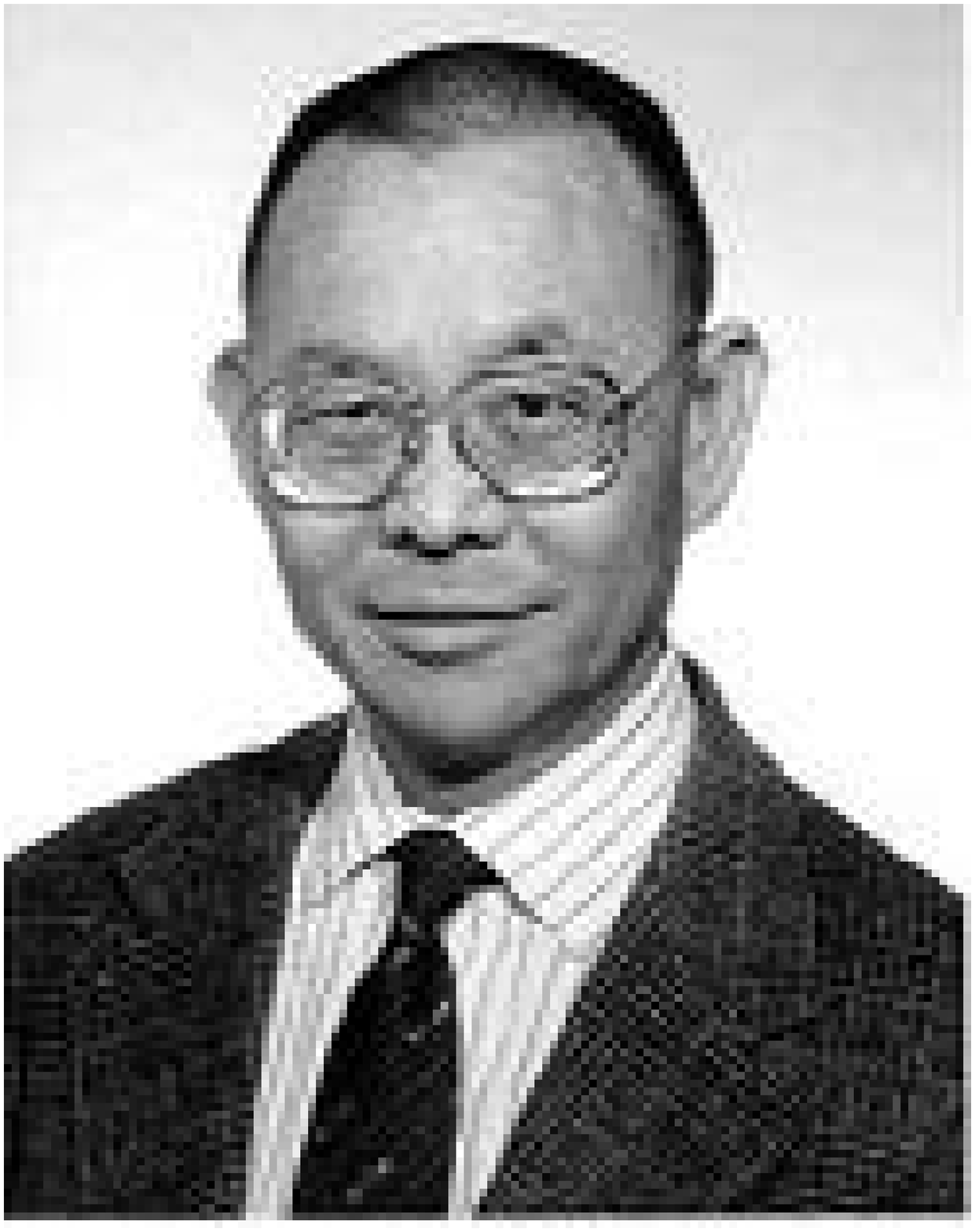}}]{Tzyh-Jong Tarn}
(M¡¯71-SM¡¯83-F¡¯85) received the D.Sc degree in control system
engineering from Washington University at St. Louis, Missouri, USA.

He is currently a Senior Professor in the Department of Electrical
and Systems Engineering at Washington University, St. Louis, USA.
He also is the director of the Center for Quantum Information
Science and Technology at Tsinghua University, Beijing, China.

An active member of the IEEE Robotics and Automation Society, Dr.
Tarn served as the President of the IEEE Robotics and Automation
Society, 1992-1993, the Director of the IEEE Division X (Systems
and Control), 1995-1996, and a member of the IEEE Board of
Directors, 1995-1996.

He is the first recipient of the Nakamura Prize (in recognition
and appreciation of his contribution to the advancement of the
technology on intelligent robots and systems over a decade) at the
10th Anniversary of IROS in Grenoble, France, 1997, the recipient
of the prestigious Joseph F. Engelberger Award of the Robotic
Industries Association in 1999 for contributing to the advancement
of the science of robotics, the Auto Soft Lifetime Achievement
Award in 2000 in recognition of his pioneering and outstanding
contributions to the fields of Robotics and Automation, the
Pioneer in Robotics and Automation Award in 2003 from the IEEE
Robotics and Automation Society for his technical contribution in
developing and implementing nonlinear feedback control concepts
for robotics and automation, and the George Saridis Leadership
Award from the IEEE Robotics and Automation Society in 2009. In
2010 he received the Einstein Chair Professorship Award from the
Chinese Academy of Sciences and the John R. Ragazzini Award from
the American Automatic Control Council. He was featured in the
Special Report on Engineering of the 1998 Best Graduate School
issue of US News and World Report and his research accomplishments
were reported in the ¡°Washington Times¡±, Washington D.C., the
¡°Financial Times¡±, London, ¡°Le Monde¡±, Paris, and the
¡°Chicago Sun-Times¡±, Chicago, etc. Dr. Tarn is an IFAC Fellow.

\end{biography}

\begin{biography}[{\includegraphics[width=1in,height=1.25in,clip,keepaspectratio]{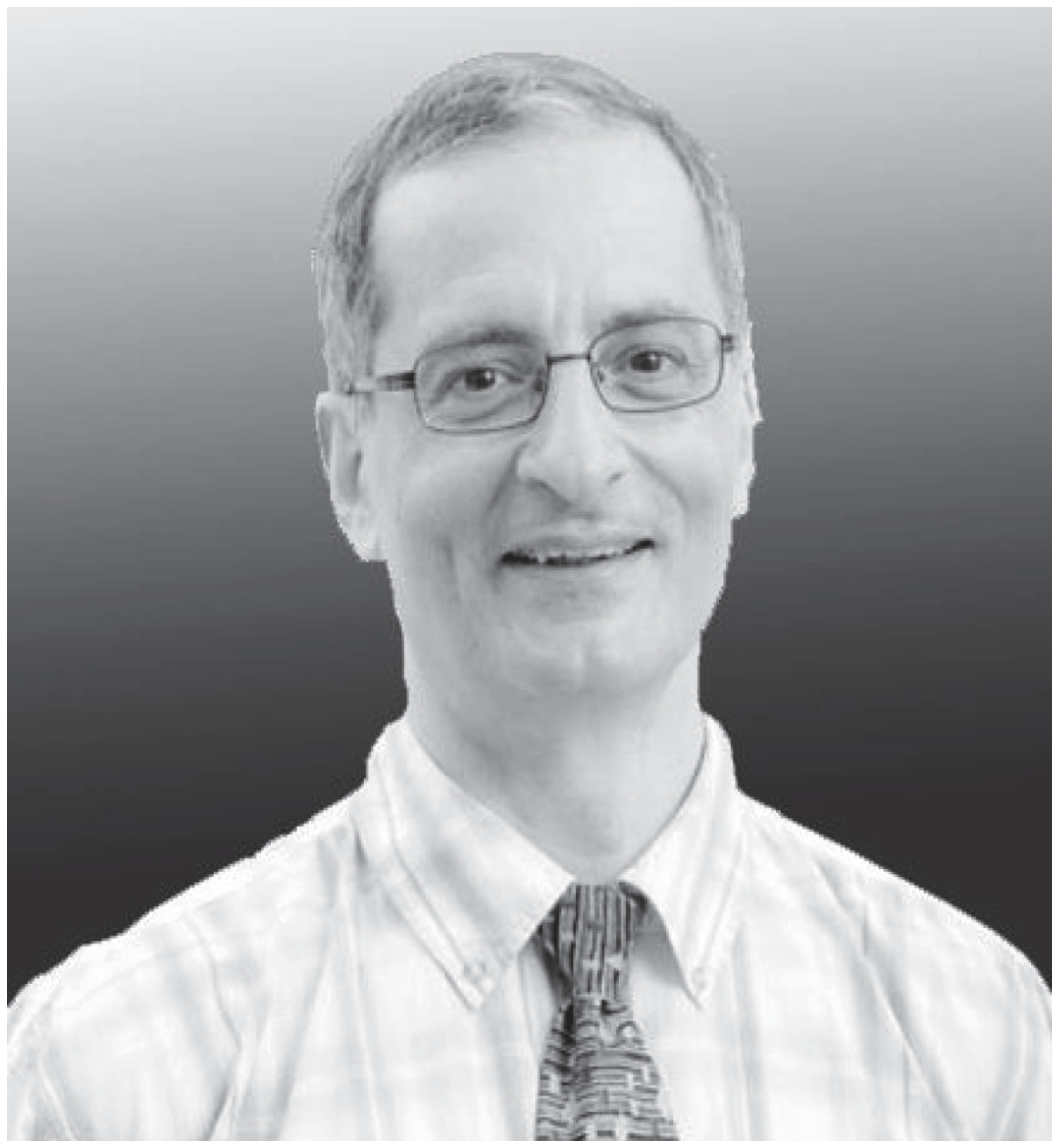}}]{Franco Nori}
received his M.S. and Ph.D. in Physics from the University of Illinois at Urbana-Champaign, USA, in 1982 and 1987.
From 1987 to 1989, he was a Postdoctoral Research Fellow at the Institute for Theoretical Physics,
University of California, Santa Barbara. Since 1990, he has been Assistant Professor, Associate
Professor, Full Professor and Research Scientist at the Department of Physics, University of Michigan,
Ann Arbor, USA. Also, since 2002, he has been a Team Leader at the Advanced Science Institute,
RIKEN, Saitama, Japan.  Since 2013, he is a RIKEN Chief Scientist, as well as a
Group Director of the Quantum Condensed Matter Research Group, at CEMS, RIKEN.

In 1997 and 1998, he received the ``Excellence in Education Award" and ``Excellence in Research Award" from the Univ. of Michigan.
In 2002, he was elected Fellow of the American Physical Society (APS), USA.
In 2003, he was elected Fellow of the Institute of Physics (IoP), UK.
In 2007, he was elected Fellow of the American Association for the
Advancement of Science (AAAS), USA. In 2013 he received the
Prize for Science and Technology, the Commendation for Science
and Technology, by the Minister of Education, Culture, Sports,
Science and Technology, Japan.

His research interests include nano-science, condensed matter physics,
quantum circuitry, quantum information processing, the dynamics of
complex systems, and the interface between mesoscopics,
quantum optics, atomic physics, and nano-science.

\end{biography}








\begin{thebibliography}{1}

\bibitem{HJKimble} H.~J. Kimble, ``The quantum internet," \emph{Nature} vol. 453, pp. 1023-1030, 2008.

\bibitem{RvanMeterbook} R.~V. Meter, \emph{Quantum Networking}. London, UK: Wiley-ISTE, 2014.

\bibitem{JZhangSR} J. Zhang, Y.-X. Liu, S.~K. \"{O}zdemir, R.-B. Wu, F. Gao, X.-B. Wang, L. Yang, and F.
Nori, ``Quantum internet using code division multiple access,"
\emph{Sci. Rep.} vol. 3, pp. 2211, 2013.

\bibitem{NLambert} N. Lambert, Y.~N. Chen, Y.~C. Chen, C.~M. Li, G.~Y. Chen, and F.
Nori, ``Quantum biology," \emph{Nat. Phys.}, vol. 9, pp. 10-18,
2013.

\bibitem{ZLXiang} Z.-L. Xiang, S. Ashhab, J.~Q. You, and F. Nori, ``Hybrid quantum circuits: Superconducting circuits interacting with other quantum
systems," \emph{Rev. Mod. Phys.}, vol. 85, pp. 623-653, 2013.

\bibitem{IGeorgescu} I.Georgescu, S. Ashhab, and F. Nori, ``Quantum simulation," \emph{Rev. Mod. Phys.}, vol. 86, pp. 153-185, 2014.

\bibitem{IBuluta} I.Buluta, S. Ashhab, and F. Nori, ``Natural and artificial atoms for
quantum computation," \emph{Reports on Progress in Physics}, vol.
74, pp. 104401, 2011.

\bibitem{Alessandro}  D. D' Alessandro,
\emph{Introduction to Quantum Control and Dynamics}. Boca Raton,
LA: Chapman \& Hall, 2007.

\bibitem{HMabuchiIJRNC:2005}
H. Mabuchi and N. Khaneja, ``Principles and applications of
control in quantum systems," \emph{Int. J. Robust Nonlinear
Control}, vol. 15, pp. 647-667, 2005.

\bibitem{PRouchon:2008}
P. Rouchon, ``Quantum systems and control," \emph{Arima}, vol. 9,
pp. 325-357, 2008.

\bibitem{CBrif:2010}
C. Brif, R. Chakrabarti, and H. Rabitz, ``Control of quantum
phenomena: past, present and futures," \emph{New J. Phys.}, vol.
12, pp. 075008, 2010.

\bibitem{JZhangPhysRep} J. Zhang, Y.-X. Liu, R.-B. Wu, K. Jacobs, and F. Nori, ``Quantum feedback: theory, experiments, and applications," in preparation.

\bibitem{DYDong:2010}
D.~Y. Dong and I.~R. Petersen, ``Quantum control theory and
applications: a survey," \emph{IET Control Theory and
Applications}, vol. 4, pp. 2651-2671, 2010.

\bibitem{CAltafini:2012}
C. Altafini and F. Ticozzi, ``Modeling and control of quantum
systems: an introduction," \emph{IEEE Trans. Automat. Contr.},
vol. 57, pp. 1898-1917, 2012.

\bibitem{GHuang:1983}
G. Huang, T.~J. Tarn, and J. W. Clark, ``On the controllability of
quantum-mechanical systems," \emph{J. Math. Phys.}, vol. 24, pp.
2608-2618 (1983).

\bibitem{HMWisemanbook} H. M. Wiseman and G. J. Milburn,
\emph{Quantum Measurement and Control}. Cambridge, U.K.: Cambridge
University Press, 2009.

\bibitem{MRJames:2005}
M.~R. James, \emph{Control Theory: From Classical to Quantum
Optimal, Stochastic, and Robust Control}. Australian National
University, 2005.

\bibitem{HMabuchiScience:2002}
H. Mabuchi  and A.~C. Doherty, ``Cavity quantum electrodynamics:
coherence in context," \emph{Science}, vol. 298, pp. 1372-1377
(2002).

\bibitem{SHabibLAS:2002}
S. Habib, K. Jacobs, and H. Mabuchi, ``Quantum feedback control:
How can we control quantum systems without disturbing them?"
\emph{Los Alamos Science}, vol. 27, pp. 126-135 (2002).

\bibitem{Mirrahimi} M. Mirrahimi and R. V. Handel, ``Stabilizing feedback controls for quantum systems,"
\emph{SIAM J. Control Optim.}, vol. 46, pp. 445-467, 2007.

\bibitem{NYamamoto} N. Yamamoto and L. Bouten, ``Quantum risk-sensitive estimation and robustness,"
\emph{IEEE Trans. Automat. Contr.}, vol. 54, pp. 92-107, 2009.

\bibitem{Altafini3} C. Altafini, ``Feedback stabilization of isospectral control systems on complex flag manifolds: application to quantum ensembles,"
\emph{IEEE Trans. Automat. Contr.}, vol. 52, pp. 2019-2028, 2007.

\bibitem{JZhang4} J. Zhang, Y.-X. Liu, and F. Nori, ``Cooling and squeezing the fluctuations of a nanomechanical beam by indirect quantum feedback control", \emph{Phys.
Rev. A}, vol. 79, pp. 052102, 2009.

\bibitem{Wang} X.~T. Wang and S.~G. Schirmer, ``Analysis of Lyapunov method for control of quantum states," \emph{IEEE Trans. Automat. Contr.}, vol. 55, pp. 2259-2270,
2010.

\bibitem{BQi} B. Qi and L. Guo, ``Is Measurement-based feedback still better for
quantum control systems?" \emph{Sys. Contr. Lett.}, vol. 59, pp.
333-339, 2010.

\bibitem{WCuiPRA:2013} W. Cui and F. Nori, ``Feedback control of Rabi oscillations in circuit
QED," \emph{Phys. Rev. A}, vol. 88, pp. 063823, 2013.

\bibitem{Gardiner} C.~W. Gardiner and M.~J. Collett, ``Input and output in damped quantum systems: Quantum stochastic differential equations and the master equation," \emph{Phys. Rev. A}, vol. 31, pp. 3761-3774, 1985.

\bibitem{Gardiner2} C.~W. Gardiner and P. Zoller, \emph{Quantum
Noise, 3rd edition}. Berlin: Springer, 2004.

\bibitem{Gardiner3} C.~W.
Gardiner, ``Driving a quantum system with the output field from
another driven quantum system," \emph{Phys. Rev. Lett.}, vol. 70,
pp. 2269-2272, 1993.

\bibitem{Carmichael} H.~J. Carmichael, ``Quantum trajectory theory for cascaded open
systems," \emph{Phys. Rev. Lett.} vol. 70, pp. 2273-2276, 1993.

\bibitem{James} M.~R. James, H.~I. Nurdin, and I.~R. Petersen,
``$H^{\infty}$ control of linear quantum stochastic systems,"
\emph{IEEE Trans. Automat. Contr.}, vol. 53, pp. 1787-1803, 2008.

\bibitem{Gough} J. Gough and M.~R. James,
``The series product and its application to quantum feedforward
and feedback networks," \emph{IEEE Trans. Automat. Contr.}, vol.
54, pp. 2530-2544, 2009.

\bibitem{Squeezing_by_coherent_feedback} J.~E. Gough, M.~R. James, and H.~I. Nurdin, ``Squeezing components in linear quantum feedback networks,"
\emph{Phys. Rev. A}, vol. 81, pp. 023804, 2010.

\bibitem{HMWiseman4} H.~M. Wiseman and G.~J. Milburn, ``All-optical versus electro-optical quantum-limited feedback,"
\emph{Phys. Rev. A}, vol. 49, 4110-4125, 1994.

\bibitem{Lloyd} S. Lloyd,
``Coherent quantum feedback," \emph{Phys. Rev. A}, vol. 62, pp.
022108, 2000.

\bibitem{Mabuchi} H. Mabuchi,
``Coherent-feedback quantum control with a dynamic compensator,"
\emph{Phys. Rev. A}, vol. 78, pp. 032323, 2008.

\bibitem{RHamerly} R. Hamerly and H. Mabuchi, ``Advantages of coherent feedback for cooling quantum oscillators,"
\emph{Phys. Rev. Lett.}, vol. 109, pp. 173602, 2012.

\bibitem{Chia} A. Chia and H.~M. Wiseman, ``Quantum theory of multiple-input-multiple-output Markovian feedback with diffusive measurements,"
\emph{Phys. Rev. A}, vol. 84, pp. 012120, 2011.

\bibitem{SIida} S. Iida, M. Yukawa, H. Yonezawa, N. Yamamoto, and A.
Furusawa, ``Experimental demonstration of coherent feedback
control on optical field squeezing," \emph{IEEE Trans. Automat.
Contr.}, vol. 57, pp. 2045-2050, 2012.

\bibitem{JKerckhoff} J. Kerckhoff and K.~W. Lehnert, ``Superconducting microwave multivibrator produced by coherent
feedback," \emph{Phys. Rev. Lett.}, vol. 109, pp. 153602, 2012.

\bibitem{Nurdin} H.~I. Nurdin, M.~R. James, and A.~C. Doherty, ``Network synthesis of linear dynamical quantum stochastic
systems," \emph{SIAM J. Control Optim.}, vol. 48, no. 4, pp.
2686-2718, 2009.

\bibitem{Maalouf} A.~I. Maalouf and I.~R. Petersen, ``Coherent $H^{\infty}$ control for a class of annihilation operator linear quantum systems," \emph{IEEE Trans. Automat. Contr.}, vol. 56, pp. 309-319, 2011.

\bibitem{JZhang} J. Zhang, R. B. Wu, C. W. Li, and T. J. Tarn, ``Protecting coherence and entanglement
by quantum feedback controls," \emph{IEEE Trans. Automat. Contr.},
vol. 55, no. 3, pp. 619-633, 2010.

\bibitem{JZhang2} J.
Zhang, R.~B. Wu, Y.-X. Liu, C.~W. Li, and T.~J. Tarn, ``Quantum
coherent nonlinear feedback with applications to quantum optics on
chip," \emph{IEEE Trans. Automat. Contr.}, vol. 57, pp. 1997-2008,
2012.

\bibitem{GFZhang} G.~F. Zhang and M.~R. James, ``Direct and indirect couplings in coherent feedback control of linear quantum
systems," \emph{IEEE Trans. Automat. Contr.}, vol. 56, pp.
1535-1550, 2011.

\bibitem{Hudson} R.~I. Hudson and K.~R. Parthasarathy,
``Quantum Ito's formula and stochastic evolution," \emph{Commun.
Math. Phys.}, vol. 93, pp. 301-323, 1984.

\bibitem{MYanagisawa1} M. Yanagisawa and H. Kimura, ``Transfer function approach to quantum control-part I: Dynamics of quantum feedback
systems," \emph{IEEE Trans. Automat. Contr.}, vol. 48, pp.
2107-2120, 2003.

\bibitem{AJShaiju} A.~J. Shaiju and I.~R. Petersen, ``A frequency domain condition for the physical
realizability of linear quantum systems," \emph{IEEE Trans.
Automat. Contr.}, vol. 57, pp. 2033-2044, 2012.

\bibitem{HMWisemanPRL2005} H.~M. Wiseman and A.~C. Doherty, ``Optimal unravellings for feedback control in linear quantum
systems," \emph{Phys. Rev. Lett.}, vol. 94, pp. 070405, 2005.

\bibitem{LDiosi} L. Diosi, ``Non-Markovian open quantum systems: input-output fields, memory, and monitoring," \emph{Phys. Rev. A}, vol. 85, pp. 034101,
2012.

\bibitem{JZhang3} J. Zhang, Y.~X. Liu, R.-B. Wu, K. Jacobs, and F.
Nori, ``Non-Markovian quantum input-output networks," \emph{Phys.
Rev. A}, vol. 87, pp. 032117, 2013.

\bibitem{BLHigginsNature:2007} B.~L. Higgins, D.~W. Berry, S.~D. Bartlett, H.~M. Wiseman, G.~J.
Pryde, ``Entanglement-free Heisenberg-limited phase estimation,"
\emph{Nature}, vol. 450, pp. 393-396, 2007.

\bibitem{GYXiangNatPhoton:2011} G.~Y. Xiang, B.~L. Higgins, D.~W. Berry, H.~M. Wiseman, G.~J.
Pryde, ``Entanglement-enhanced measurement of a completely unknown
optical phase," \emph{Nat. Photon.}, vol. 5, pp. 43-47, 2011.

\bibitem{HYonezawaScience:2012} H. Yonezawa, D. Nakane, T. A. Wheatley, K. Iwasawa, S. Takeda, H. Arao, K. Ohki, K. Tsumura, D.~W. Berry, T.~C. Ralph, H.~M. Wiseman,
E.~H. Huntington, A. Furusawa, ``Quantum-enhanced optical-phase
tracking," \emph{Science}, vol. 337, pp. 1514-1517, 2012.

\bibitem{CSayrinNature:2011} C. Sayrin, I. Dotsenko, X. X. Zhou, B. Peaudecerf, T.
Rybarczyk, S. Gleyzes, P. Rouchon, M. Mirrahimi, H. Amini, M.
Brune, J.-M. Raimond,  and S. Haroche, ``Real-time quantum
feedback prepares and stabilizes photon number states,"
\emph{Nature}, vol. 477, pp. 73-77,
2011.

\bibitem{EGavarttinNatnano:2012} E. Gavartin, P. Verlot, and
T. J. Kippenberg, ``A hybrid on-chip optomechanical transducer for
ultrasensitive force measurmeents," \emph{Nat. Nano.}, vol. 7, pp.
509-514, 2012.

\bibitem{RVijayNature2012} R. Vijay, C. Macklin, D.~H. Slichter, S.~J. Weber, K.~W. Murch, R. Naik, A.~N. Korotkov, and I. Siddiqi, ``Stabilizing Rabi oscillations in a
superconducting qubit using quantum feedback," \emph{Nature}, vol.
490, pp. 77-80, 2012.

\bibitem{DRisteNature:2013} D. Riste, M. Dukalski, C.~A. Watson, G. de Lange, M.~J. Tiggelman, Ya. M. Blanter, K.~W. Lehnert, R.~N. Schouten, and L. DiCarlo, ``Deterministic
entanglement of superconducting qubits by parity measurement and
feedback," \emph{Nature}, vol. 502, pp. 350-354, 2013.

\bibitem{DRistePRL:2012} D. Riste, C.~C. Bultink, K.~W. Lehnert, and L. DiCarlo, ``Feedback
control of a solid-state qubit using high-fidelity project
measurement," \emph{Phys. Rev. Lett.}, vol. 109, pp. 240502, 2012.

\bibitem{SShankerNature:2013} S. Shankar, M. Hatridge, Z. Leghtas, K.~M. Sliwa, A. Narla, U. Vool, S.~M. Girvin, L. Frunzio, M. Mirrahimi, and M.~H. Devoret, ``Autonomously
stabilized entanglement between two superconducting quantum bits,"
\emph{Nature}, vol. 504, pp. 419-422, 2013.

\bibitem{DGorman:1965} D. Gorman and J. Zaborsky, ``Functional representation of nonlinear systems: interpolation and Lagrange expansion for functionals", 6th Joint Autom. Contr. Conf., Troy, NY, June 1965. IEEE Proceedings NY, 1965, pp. 169-177.

\bibitem{LOChua1} L.~O. Chua and C.-Y. Ng, ``Frequency domain analysis of nonlinear systems: general
theory," \emph{Electronic Circuits and Systems}, vol. 3, pp.
165-185, 1979.

\bibitem{LOChua2} L.~O. Chua and C.-Y. Ng, ``Frequency-domain analysis of nonlinear systems:
formulation of transfer functions," \emph{Electronic Circuits and
Systems}, vol. 3, pp. 257-269, 1979.

\bibitem{MKielkiewicz:1970} M. Kielkiewicz, ``Cascade connection of non-linear
systems," \emph{Int. J. Control}, vol. 11, pp. 1005-1010, 1970.

\bibitem{JCPeyton:1991} J.~C. Peyton Jones and S.~A. Billings, ``Describing functions, Volterra series, and the analysis of
non-linear systems in the frequency domain," \emph{Int. J.
Control}, vol. 53, pp. 871-887, 1991.

\bibitem{MPoot} M. Poot and H.S.J. van der Zant, ``Mechanical systems in the quantum regime," \emph{Phys. Rep.}, vol. 511, pp. 273-335, 2012.

\bibitem{GJMilburn} G.~J. Milburn and M.~J. Woolley, ``An introduction to quantum optomechanics," \emph{Acta Phys. Slovaca}, vol. 61, pp. 483-601, 2011.

\bibitem{RHamerlyPRL:2012} R. Hamerly and H. Mabuchi,
``Advantages of coherent feedback for cooling quantum
oscillators," \emph{Phys. Rev. Lett.}, vol. 109, pp. 173602, 2012.

\bibitem{XYLvSR:2013} X.~Y. L\"{u}, W.~M. Zhang, S. Ashhab, Y. Wu, and F. Nori
``Quantum-criticality-induced strong Kerr nonlinearities in
optomechanical systems," \emph{Sci. Rep.}, vol. 3, pp. 2943, 2013.

\bibitem{JQLiaoPRA:2012} J.-Q. Liao, H.~K.
Cheung, and C.~K. Law, ``Spectrum of single-photon emission and
scattering in cavity optomechanics," \emph{Phys. Rev. A}, 85,
025803, 2012.

\bibitem{XWXuPRA:2013}
X.-W. Xu, H. Wang, J. Zhang, and Y.-X. Liu, ``Engineering of
nonclassical motional states in optomechanical systems,"
\emph{Phys. Rev. A}, vol. 88, pp. 063819, 2013.

\bibitem{HMabuchiAPL:2011} H. Mabuchi, ``Nonlinear
interferometry approach to photonic sequential logic," \emph{Appl.
Phys. Lett.}, vol. 99, pp. 153103, 2011.

\bibitem{JKerckhoffPRL:2012} J. Kerckhoff and K.~W.
Lehnert, ``Superconducting microwave multivibrator produced by
coherent feedback," \emph{Phys. Rev. Lett.}, vol. 109, pp. 153602,
2012.

\bibitem{JKerckhoffPRL:2010} J.
Kerckhoff, H.~I. Nurdin, D.~S. Pavlichin, and H. Mabuchi,
``Designing quantum memories with embedded control: photonic
circuits for autonomous quantum error correction," \emph{Phys.
Rev. Lett.}, vol. 105, pp. 040502, 2010.

\bibitem{ZZhouAPL:2012} Z. Zhou, C.~J. Liu, Y.~M. Fang, J.
Zhou, R.~T. Glasser, L.~Q. Chen, J.~T. Jing, and W.~P. Zhang,
``Optical logic gates using coherent feedback," \emph{Appl. Phys.
Lett.}, vol. 101, pp. 191113, 2012.

\bibitem{ZPLiuPRA:2013} Z.-P. Liu, H. Wang, J. Zhang, Y.-X. Liu, R.-B. Wu, C.-W Li, and F.
Nori, ``Feedback-induced nonlinearity and superconducting on-chip
quantum optics," \emph{Phys. Rev. A}, vol. 88, pp. 063851, 2013.

\bibitem{Genoni} M.~G. Genoni, M.G.A. Paris, and K. Banaszek,
``Measure of the non-Gaussian character of a quantum state,"
\emph{Phys. Rev. A}, vol. 76, pp. 042327, 2007.

\end{thebibliography}
\end{document}